\DeclareRobustCommand{\citet}[1]{\citeauthor{#1}~\citeyear{#1}}
\title{Measuring and Controlling Divisiveness in Rank Aggregation}
\author{
Rachael Colley$^1$
\and
Umberto Grandi$^1$\and
C\'esar Hidalgo$^{2,3,4}$\and
Mariana Macedo$^2$ \and
Carlos Navarrete$^2$
}
\date{\{rachael.colley,umberto.grandi\}@irit.fr,
\{cesar.hidalgo, mariana.macedo, carlos.navarrete\}@univ-toulouse.fr\\
$^1$IRIT, Universit\'e Toulouse Capitole, France\\
$^2$Center for Collective Learning, ANITI, TSE, IAST, IRIT, Université de Toulouse, France\\
$^3$Alliance Manchester Business School, University of Manchester, UK\\
$^4$Center for Collective Learning, CIAS, Corvinus University, Hungary}
\newcommand{\BibTeX}{\rm B\kern-.05em{\sc i\kern-.025em b}\kern-.08em\TeX}
\newtheorem{example}{Example}
\newtheorem{definition}{Definition}
\newtheorem{proposition}{Proposition}
\newcommand{\N}{\mathcal{N}} 
\newcommand{\F}{\mathcal{F}} 
\renewcommand{\P}{\mathcal{P}} 
\newcommand{\I}{\mathcal{I}}
\newcommand{\Div}{\textsc{Div}}
\newcommand{\Rank}[2]{\mathit{rank}(#1,#2)} 
\newcommand{\Var}{Var}
\newcommand{\score}{s}
\newcommand{\X}{\mathcal X}
\newcommand{\borda}{\mathit{Borda}}
\newcommand{\copeland}{\mathit{Cop}}
\newcommand{\inject}[1]{\textsc{Inject}_{#1}}
\begin{document}




\maketitle

\begin{abstract}
In rank aggregation, members of a population rank issues to decide which are collectively preferred. 
We focus instead on identifying divisive issues that express disagreements among the preferences of individuals.   
We analyse the properties of our divisiveness measures and their relation to existing notions of polarisation. 
We also study their robustness under incomplete preferences and algorithms for control and manipulation of divisiveness.   
Our results advance our understanding of how to quantify disagreements in collective decision-making.
\end{abstract}


\section{Introduction}

Rank aggregation is the problem of ordering a set of issues according to a set of individual rankings given as input. 
This problem has been studied extensively in computational social choice (see, e.g., \citet{brandt2016handbook}) when the rankings are assumed to represent human preferences over, for example, candidates in a political election, projects to be funded, or more generally alternative proposals. 
The most common approach in this literature is to find normative desiderata for the aggregation process, including computational requirements such as the existence of tractable algorithms for its calculation and characterisations of the aggregators that satisfy them.
Rank aggregation also has a wide spectrum of applications from metasearch engines \citep{dwork2001rank} to bioinformatics \citep{kolde2012robust}, receiving attention also in the statistical ML literature \citep{korba2017learning}.

Previous work on rank aggregation has focused on how to best elicit which issues are the most agreed upon, without identifying the issues that divide them. 
Instead, a wide literature in economics and the social sciences has developed measures of social, economic, and political polarisation. 
Classical work analysed polarisation in the distribution of wealth, goods, and opinions \citep{esteban1994,duclos2004polarization}, showing that well-studied notions of inequality are unfit to measure polarisation as they do not consider the weight of sub-populations.  
Another common approach in social science uses the variance of distributions to measure polarisation (see, e.g.,  \citet{musco2018minimizing,gaitonde2020adversarial1}). 

%


In this paper, we put forward a family of functions that starting from a collection of individual rankings are able to order issues based on their divisiveness.
We compute an issue's divisiveness by aggregating the disagreement among all possible sub-populations defined by the relative preference among the other issues.
Our proposal relates to the literature on three important aspects.
First, a parameter in our definition allows us to move from an adaptation of the classical measure of polarisation from \citet{esteban1994} at one end of the spectrum to the detection of disagreements from minorities at the other end.
Second, our measures are parameterised by the rank aggregation function that is used to compute the most agreed-upon issues.
In this way, we can align our notions of divisiveness with the functions chosen to measure the agreement of the population.\footnote{This is particularly important in social choice applications, where the individual preferences collected are a (possibly strategic) response to the collective decision-making rule chosen.} 
Third, while existing work focused on comparing different sets of rankings based on polarisation \citep{can2015measuring}, diversity \citep{HashemiE14}, or cohesiveness \citep{alcantud2015pairwise}, here, we aim at identifying the most divisive issues \emph{within} a complete profile of rankings. 
In doing so, we do not need to assume that issues are independent, as common in the social choice literature.



Our work can further guide how to query a population towards being more inclusive and unified, e.g., through deliberative instances.
This can include measures that go towards decreasing divisiveness, such as recent work suggesting the construction of recommender systems to depolarise a population \citep{Stray22FirstMonday}, or simply take advantage of this information when steering the public debate (recent work from \cite{ash2017elections} suggests that politicians spend more time on divisive topics than on neutral ones). 
Our work also contributes to social choice theory, where related notions of preference diversity have shown to have effects on the probability of paradoxes \citep{gehrlein2010voting}, the competitiveness in matching markets \citep{halaburda2010unravelling}, or the computational complexity of manipulating an election \citep{Wu2022manipulating}. 
Moreover, our work can be useful in refining the preference analysis of applications, such as online forums or surveys, that query a population on their opinions and return aggregated information about the group as a whole.

\paragraph{Contribution and Paper Structure.}
We extend the notion of divisiveness, introduced by  \citet{navarrete2022understanding}, to a family of measures which take into account the size of a sub-population and use the well-studied scoring functions. 
These extensions allow us to connect divisiveness to measures of polarisation. 
We give a theoretical and experimental analysis of our divisiveness measures by relating them to other notions and giving bounds on their limit cases (Section~\ref{sec:measuringdiv}). Importantly, we show that our measures can distinguish between key profiles which other measures cannot. We then inspect two aspects of control, first, by studying the effect of removing pairwise comparisons from the agent's rankings (Section~\ref{sec:controladd}
) and second, by adding additional controlled agents (Section~\ref{sec:controladd}).
All our code is available at \url{https://github.com/CenterForCollectiveLearning/divisiveness-theoretical-IJCAI2023}.

\paragraph{Related Work.}

The notion of divisiveness studied in this paper builds on the work of \citet{navarrete2022understanding}, who identify the most divisive issues from proposed government programs from crowdsourced political preferences. 
Many papers start from a profile of rankings and compare them based on how \emph{consensual} (equivalently, \emph{cohesive}) they are \citep{boschcharacterizations,garcia2010consensus,alcalde2008measurement,alcalde2013measuring}, or in the opposite direction, i.e., how \emph{diverse} is the set of preferences \citep{HashemiE14,karpov2017preference}. 
In particular, \citet{alcantud2015pairwise} measures the cohesiveness of a group by aggregating the dissimilarity of their orderings (similarly to \citet{can2015measuring} who however focus on polarisation).
Most of these settings are based on pairwise comparisons, except for \citet{alcalde2016we} and \citet{xue2020quantifying}, who look at patterns of varying sizes in the rank profile.
One of the methods proposed by \citet{HashemiE14} to measure preference diversity is to average the distance between the individual rankings and the aggregated one. 
This is in line with our approach, but it only provides a measure to compare different populations of rankings without going to the level of single issues.
We note that also an influential theory of diversity not based on preferences but on (binary) features of not necessarily independent alternatives has been proposed by \citet{nehring2002theory}.

\section{Basic Definitions}\label{sec:basicdef}

This section introduces the basics of rank aggregation and scoring rules. We put forward our definition of divisiveness, then we compare it with existing notions of polarisation.

\subsection{Preliminaries}

\paragraph{Rankings.}
Let $\I=\{a,b\dots\}$ be a finite non-empty set of $m$ issues.
A strict ranking (aka. linear order) on $\I$ is an asymmetric, transitive, and complete binary relation on $\I$. We let  $a\succ b$ denote the fact that $a$ is strictly preferred to $b$ in the ranking $\succ$.
In what follows, we will write $\succ=acdb$ for $\succ=a\succ c \succ d\succ b$, reading preferences from left to right.
The set of all strict rankings over $\I$ will be denoted by $\mathcal L(\I)$. We denote with $\Rank{a}{\succ}$ the rank of $a$ in $\succ$ with the first position being $1$ and the last being $m$.

\paragraph{Individual Rankings.}
Let a finite non-empty set of $\N=\{1,\dots, n\}$ agents express a strict rankings over $\I$ (sometimes referred to as preferences).
We let $\P=(\succ_1, \dots, \succ_n)$ denote the resulting profile of rankings, where $\succ_i$ is agent's $i$ ranking over $\I$. 
We let $\N_{a\succ b}=\{i\in \N \mid a \succ_i b\}$ be the set of voters in $\N$ who prefer $a$ to $b$. 
The restriction of profile $\P$ to the agents in $\mathcal X$ is denoted by $\P_\mathcal X=\langle \succ_i\mid i\in \mathcal X \rangle$. 
When $\mathcal X =\N_{a\succ b}$ we simply write $\P_{a\succ b}$.
We call $\P$ a consensual profile if for all $i,j\in\N$ we have that $\succ_i=\succ_j$.
Every preference profile $\P$ can be represented as a weighted (anonymous) profile, i.e., as a set of pairs $(w_j,\succ_j)$ indicating that $w_j\in \mathbb N$ agents have preference $\succ_j$.

\paragraph{Collective Scoring of Issues.}

Rank aggregation functions define a collective ranking of issues based on the agreements among the individual rankings in a profile. 
A large number of rules have been proposed and analysed in the literature on (computational) social choice and artificial intelligence.
We focus on rank aggregators defined by a scoring function, where the collective ranking over issues is obtained via a function  $\score: \I \times \mathcal L(\I)\to [0,1]$ that assigns a score to each issue in a given profile.
Notable examples are the (normalised) Borda score, which counts the number of issues strictly preferred to a given issue, $\borda(a,\P)=\sum_{b\in \I\backslash \{a\}} \frac{\#(\N_{a>b})}{n\cdot (m-1)}$, where $\#(\X)$ is the cardinality of a set $\X$.
Or the normalised Copeland score, which counts the number of majority contests won by an issue, $\copeland(a,\P)=\frac{\#\{b\in \I\backslash \{a\}
\,\mid\, \#(\N_{a>b})>\#(\N_{b>a})\}}{m-1}$.

\subsection{Divisiveness}

For a given sub-population $\X\subseteq \N$ and issue $a\in \I$, we measure the divisiveness of $a$ for $\X$ as the difference between the collective scoring of $a$ in sub-population $\X$ and in its complement sub-population $\N\backslash \X$. 

\begin{definition}{\citep{navarrete2022understanding}}
The divisiveness of an issue $a\in\I$ with respect to a sub-population $\X\subseteq\N$ in profile $\P$ is defined as:
$$\Div^\score(a,\X,\P) = |\score(a,\P_{\X}) - \score(a,\P_{\N\backslash \X})|.$$
If $\X=\emptyset$ or $\X=\N$, we set $\Div^\F(a,\X,\P)=0$.
\end{definition}

Examples of a sub-population $\X$ can be descriptive, such as agents living in cities (thus $\N\backslash\X$ are agents living in rural areas) or agents with a given political orientation (thus, $\N\backslash\X$ would be those who do not ascribe to this orientation).
We can now give a definition of divisiveness for issue $a$ that is independent of a given sub-population by averaging over all sub-populations $\N_{a\succ b}$ for all other issues $b$.  
We include an additional parameter $\alpha$ that allows us to take into consideration the size of a sub-population allowing for a weighted average version of an issue's divisiveness.

\begin{definition}\label{def:div}
The $\alpha$-divisiveness $\Div_\alpha^\score(a,\P)$ of an issue $a{\in}\I$ in profile $\P$, with $\alpha\in[0,1]$ and $\ell\in\mathbb R^+$, is defined as:
$$\frac{1}{m-1}\sum\limits_{\substack{b\in\I \\ a\not=b}} \left( \ell\frac{\#(\N_{a>b})\cdot\#(\N_{b>a})}{n^2}\right) ^\alpha \Div^\score(a,\N_{a\succ b},\P).$$
\end{definition}
When $\alpha=0$ and $\score=\borda$, our definition is a reformulation of the divisiveness measure from \citet{navarrete2022understanding}. 
When $\alpha=1$, Definition~\ref{def:div} can be interpreted as one of the polarisation measures of \citet{duclos2004polarization} and \citet{esteban1994}, calculated on the distribution of ranks that issue $a$ received from individuals. 
We refer to the multiplicative factor of the measure defined in Definition~\ref{def:div} in each summand as the $\alpha$-factor. 
The $\alpha$-factor is maximal when $\#(\N_{a\succ b })=\#(\N_{b\succ a})=\nicefrac{n}{2}$, and we will often set $\ell=4$ so that the $\alpha$-factor is at most $1$.  $\ell$ is a normalising factor left open in line with \citet{esteban1994}.

Intuitively, as $\alpha$ increases in Definition~\ref{def:div}, the relevance of the size of the disagreeing sub-population increases. 
The following example presents the limit case of $\alpha=0$ where the size of disagreeing sub-populations is ignored, resulting in a different divisiveness ranking than $\alpha=1$.
 
\begin{example}\label{ex:begin}
Consider $2k$ agents giving their preferences in profile $\P$ over issues $\I=\{a,b,c,d,e,f\}$ as such:
 \begin{center}
 \begin{tabular}{l|l}
$\succ_1$ ($1$ agent) &  $c a d f e  b$\\
$\succ_2$ ($k-1$ agents) & $b a d f e c$ \\
$\succ_3$ ($k$ agents) & $b  e d f a  c$
\end{tabular}  
\end{center}

Assume now that $k=5$ and $\ell=4$ (the normalising factor). Table~\ref{tab:divworkingoutex1} presents, for issue $a$, the table of disagreements (in terms of Borda score difference) between the 5 possible sub-populations defined by the pairwise comparisons of $a$ with the remaining issues.

\begin{table}[h]
    \centering
    \begin{tabular}{ccccc}
        Issue $x$ & $s(\P_{a\succ x})$ &$s(\P_{x\succ a})$ & disagreement & $\alpha$-factor\\
        \midrule
        $b$ & $0.8$ &  $0.4\overline{6}$ &  $0.\overline{3}$ & $0.36$\\
        $c$ & $0.4\overline{6}$ &  $0.8$  & $0.\overline{3}$ & $0.36$\\
        $d$ & $0.8$ & $0.2$ & $0.6$& $1$\\
        $e$ & $0.8$ & $0.2$ & $0.6$ & $1$\\
        $f$ & $0.8$ &  $0.2$ & $0.6$ & $1$\\
    \end{tabular}
    \caption{Details of the calculations for divisiveness for issue $a$, setting $\ell=4$ and $k=5$ and with $s=\borda$.}
    \label{tab:divworkingoutex1}
\end{table}

From Table~\ref{tab:divworkingoutex1}, we can compute $\Div_0^\borda(a, \P)$ by averaging the disagreements and weighting by the $\alpha$-factors to obtain $\Div_1^\borda(a, \P)$. Repeating this process for every issue gives us the values of divisiveness in Table~\ref{tab:beginexdivmeasures}.

\begin{table}[h]
    \centering
    \begin{tabular}{cccc}
    $x$ &$\borda(x,\P)$& $\Div_0^\borda(x, \P)$     & $\Div_1^\borda(x, \P)$    \\
    \midrule
      $a$   &$0.5$ & $0.49\overline{3}$ & $0.408$\\
      $b$& $0.9$  & $1$ & $0.36$ \\
      $c$& $0.1$  & $1$ & $0.36$ \\
      $d$  & $0.6$ &$0$&$0$\\
      $e$   & $0.5$&$0.49\overline{3}$&$0.408$\\
      $f$  & $0.4$&$0$&$0$\\
      
    \end{tabular}
    \caption{Normalised Borda score and divisiveness for $\alpha=0,1$ of issues in the profile of Example~\ref{ex:begin} with $k=5$ and $\ell=4$.}
    \label{tab:beginexdivmeasures}
\end{table}

The most divisive issue for $\Div_0^\borda$ is $b$ and $c$, who are at the opposite extreme of the ordering for the first agent with respect to the rest of the population, while the most divisive for $\Div_1^\borda$ are $a$ and $e$, who are second or second-to-last for all agents, yet this divides the population into two. Observe that this holds for any $k\geq 5$, showing a class of examples where the ranking of $\alpha$-divisiveness differs significantly depending on if $\alpha=0$ or $\alpha=1$.
\end{example}

\subsection{Rank-variance}\label{sec:variance}

Related literature in the social sciences often measures polarisation using notions of variance, which we now adapt to rankings. Let $\mu_\P(a)= \frac{1}{n}\sum_{i\in \N}\Rank{a}{\succ_i}$ be the average rank of an issue $a\in \I$ in a profile of rankings $\P$. The \emph{rank-variance} of issue $a$ is defined as follows:

$$\Var(a, \P)= \frac{1}{n}\sum\limits_{ i\in \N } (\Rank{a}{\succ_i}-\mu_\P(a))^2$$

The following example shows a profile in which the ranking of issues by variance differs from divisiveness (assuming $\alpha=0$ and $s=\borda$).

 \begin{example}
 Consider the following preference profile: 
  \begin{center}
 \begin{tabular}{l|l}
$\succ_1$ (10 agents) &  $a b c d e$\\
$\succ_2$ (10 agents) & $ebcda$ \\
$\succ_3$ (1 agent) & $acbde$\\
$\succ_4$ (1 agent) & $ebdca$ 
\end{tabular}  
\end{center}

The rank-variance and the divisiveness using Borda and $\alpha=0$ for each issue is the following:

\begin{center}
\begin{tabular}{cccccc}
                & a & b & c & d & e\\
\midrule
 $\Var(x,\P)$    & $4$& $0.0\overline{45}$ &$0.0\overline{90}$ & $0.0\overline{45} $& $4$\\
 $\Div^\borda(x)$  &$1$ &$ 0.074$ & $0.037$ & $0.074$ & $1$
\end{tabular}
\end{center}

This shows that issue $c$ has higher variance than issues $b$ and $d$ but lower divisiveness (Kendall's tau correlation between the two rankings is $\tau\approx0.5$).
\end{example}

Unlike other notions of polarisation, we focus on the comparisons between our divisiveness measures and rank-variance as they both return information about a single issue rather than about the population as a whole.

\section{Measuring Divisiveness and  Polarisation}\label{sec:measuringdiv}

In this section, we present some basic properties of our definition of divisiveness, showing in particular that with $\alpha=0$ it does not coincide nor is correlated with standard notions of polarisation. We also show how our definitions can be used to identify the sub-population that is most divided on an issue.

\subsection{Divisiveness Bounds}

We first observe that if $\score$ is a polynomially computable function, then so is $\Div^\score_\alpha$ for any $\alpha$. Moreover, if $\score$ is anonymous and neutral (as in the classical social choice terminology\footnote{A function taking profiles of linear orders as input is anonymous if permuting the individual rankings does not change the result. It is neutral if all issues are treated equally, i.e., permuting the name of the issues results in the ranking obtained by applying the same name permutation to the previous result.}) so is $\Div^\score_\alpha$ for any $\alpha$. 
A direct consequence of our definitions is that $0 \leq \Div^\score_\alpha(a,\P)\leq 1$, for any $\alpha$.  We now characterise the extremes of the spectrum.

First we give the sufficient conditions for minimal divisiveness with the Borda and Copeland scorings.
Let a profile $\P$ be rank-unanimous on $a$ if for all $i,j\in \N $ we have that $\Rank{a}{\succ_i}{=}\Rank{a}{\succ_j}$. Profile $\P$ is unanimous if $\succ_i=\succ_j$ for all $i,j\in\N$.
The next result shows that divisiveness is minimal when consensus is maximal, following this, we will discuss the converse of this statement in a few different ways. 

\begin{proposition}\label{prop:bounds}
If profile $\P$ is rank-unanimous on $a$ then $\Div_\alpha^\borda(a,\P)=0$, while not necessarily true for $\Div_\alpha^\copeland$.
If $\P$ is unanimous, then $\Div_\alpha^\copeland(a,\P)=0$ for all $a\in\I$. 
\end{proposition}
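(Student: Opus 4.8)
The plan is to prove the three assertions in order, each by unpacking the definition of $\Div^\score_\alpha$ and reducing to a statement about a single summand $\Div^\score(a,\N_{a\succ b},\P)$.

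\textbf{Step 1: rank-unanimity on $a$ implies $\Div^\borda_\alpha(a,\P)=0$.} The key observation is that if every agent places $a$ in the same position, then $\Rank{a}{\succ_i}=r$ for all $i$, so for each other issue $b$ the number of issues that $a$ beats is the same ($m-r$) in every individual ranking. Hence $\borda(a,\P_\X)=\frac{m-r}{m-1}$ for \emph{every} nonempty sub-population $\X$, in particular for $\X=\N_{a\succ b}$ and for $\X=\N\setminus\N_{a\succ b}$ whenever both are nonempty; thus $\Div^\borda(a,\N_{a\succ b},\P)=0$. When one of the two sub-populations is empty, the definition sets the divisiveness term to $0$ directly. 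Since every summand vanishes, so does the average, giving $\Div^\borda_\alpha(a,\P)=0$ for all $\alpha$.

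\textbf{Step 2: rank-unanimity on $a$ need \emph{not} imply $\Div^\copeland_\alpha(a,\P)=0$.} Here I would exhibit a small counterexample. The point is that the Copeland score of $a$ depends not just on $a$'s own position but on how the majority contests between $a$ and each $b$ resolve, and restricting to a sub-population $\N_{a\succ b}$ can flip the majority winner of a \emph{different} pair $\{a,b'\}$. I would take $m=3$ issues $\{a,b,c\}$ with $a$ always in the middle position (rank $2$), and arrange the populations $\N_{a\succ b}$ and $\N_{a\succ c}$ so that within $\N_{a\succ b}$ the pair $\{a,c\}$ has $a$ winning while within $\N\setminus\N_{a\succ b}$ it has $c$ winning (or a tie); then $\copeland(a,\P_{\N_{a\succ b}})\neq\copeland(a,\P_{\N\setminus\N_{a\succ b}})$ and the corresponding summand is strictly positive, with a strictly positive $\alpha$-factor as long as the split is nontrivial. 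Checking that the profile is consistent (each ranking a genuine linear order, $a$ in rank $2$) and computing the four restricted Copeland scores is the routine part.

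\textbf{Step 3: unanimity implies $\Div^\copeland_\alpha(a,\P)=0$ for every $a$.} If $\succ_i=\succ$ for all $i$, then every nonempty sub-population $\X$ is itself a unanimous profile with the same common ranking $\succ$, so $\copeland(a,\P_\X)$ equals the Copeland score of $a$ in $\succ$, which is independent of $\X$. Hence for any $b$ with both $\N_{a\succ b}$ and its complement nonempty, $\Div^\copeland(a,\N_{a\succ b},\P)=0$; and if one side is empty the term is $0$ by fiat. Averaging gives $\Div^\copeland_\alpha(a,\P)=0$.

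The only genuinely creative step is Step 2 — finding a clean counterexample — but since Copeland's non-monotonic dependence on the sub-population is exactly the phenomenon that breaks additivity here, a three-issue profile with $a$ fixed in the middle should suffice; the main care is ensuring the $\alpha$-factor of the offending summand is nonzero so the conclusion holds for all $\alpha\in[0,1]$, not just $\alpha=0$. Steps 1 and 3 are essentially immediate from the observation that Borda-on-$a$ and Copeland-on-$a$ are, respectively, invariant under passing to sub-populations when the relevant unanimity holds.
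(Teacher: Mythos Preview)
Your Steps 1 and 3 are correct and match the paper's argument (in Step 3 the paper is slightly more direct: in a unanimous profile, for every pair $a,b$ one of $\N_{a\succ b}$, $\N_{b\succ a}$ is empty, so every summand is zero by the convention; your extra remark about equal Copeland scores on nonempty sub-populations is vacuously true here but harmless).

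Step 2, however, has a genuine gap: a three-issue counterexample with $a$ fixed in rank $2$ cannot work. With $\I=\{a,b,c\}$ and $\Rank{a}{\succ_i}=2$ for all $i$, every agent's ranking is either $bac$ or $cab$. Then $\N_{a\succ b}$ consists precisely of the agents with ranking $cab$, so this sub-population is \emph{unanimous} on $cab$; there $a$ beats $b$ and loses to $c$, giving $\copeland(a,\P_{\N_{a\succ b}})=\tfrac12$. Symmetrically $\N_{b\succ a}$ is unanimous on $bac$, where $a$ loses to $b$ and beats $c$, again $\copeland(a,\P_{\N_{b\succ a}})=\tfrac12$. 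The same happens for the $c$-split. So every summand is $0$ and $\Div^\copeland_\alpha(a,\P)=0$ regardless of how you ``arrange'' the populations --- there is simply no freedom left with three issues. Your claimed arrangement (``within $\N_{a\succ b}$ the pair $\{a,c\}$ has $a$ winning'') is impossible, since every agent in $\N_{a\succ b}$ has $c\succ a$.

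You need at least four issues to get the required slack. The paper takes $\I=\{a,b,c,d\}$ with three agents having rankings $abcd$, $cbad$, $dbac$; here $b$ is rank-unanimous in position $2$, yet one computes $\Div^\copeland_0(b,\P)=\tfrac13$ (each of the three $1$--$2$ splits gives a Copeland difference of $\tfrac13$), and since each $\alpha$-factor is $(\ell\cdot 2/9)^\alpha>0$ the conclusion extends to all $\alpha\in[0,1]$.
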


\begin{proof}
If $\P$ is rank-unanimous on $a$, then for all $b\in \I\backslash\{a\}$ we have that $\borda(a,\P_{a\succ b})-\borda(a,\P_{b\succ a})=0$, and thus  $\Div^\score_\alpha(a,\P)=0$.
This does not necessarily hold for $\Div^\copeland$: consider 3 agents with the following rankings over issues $\I=\{a,b,c,d\}$, forming profile $\P$: $\succ_1= abcd$, $\succ_2 = cbad$, and $\succ_3 = dbac$. Observe that for all $i\in \N$ that $\Rank{b}{\succ_i}=2$, hence $\P$ is rank-unanimous on $b$. However, $\Div^\copeland_0(b,\P)=\frac{1}{3}$. 
Finally, if $\P$ is unanimous, any notion of divisiveness will be equal to zero, as either $\N_{a\succ b}$ or $\N_{b\succ a}$ is empty for any pair $a,b\in\I$.
\end{proof}


Profile $\P$ is \emph{fully polarised} if it is split into two equally-sized sub-populations with completely opposite preferences ($m$ is even). If $n$ is even, $\P$ is fully polarised if $\succ_i=\succ_1$ for $i=1,\dots,\nicefrac{n}{2}$, and $\succ_i=\succ_2$ for $i=\nicefrac{n}{2}+1,\dots,n$, where $\Rank{a}{\succ_1}=m-\Rank{a}{\succ_{2}}$ for all $a\in \I$. If $n$ is odd, one of the sub-populations has one more agent than the other. 
Let $a$ be the top-ranked issue in $\succ_1$ (hence ranked last in $\succ_2$).
We show that divisiveness is maximal in such profiles:

\begin{proposition}\label{prop:polarised=1}
If $\P$ is fully polarised and $n$ is even, then $\Div_\alpha^\borda(a,\P)=1$ when $\ell=4$, where $a$ is the top-ranked issue in one of the two sub-populations.
If $\P$ is fully polarised and $n$ is odd, then $\Div_\alpha^\copeland(a,\P)$ equals the $\alpha$-factor.
\end{proposition}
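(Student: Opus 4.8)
The plan is to split into the two cases according to the parity of $n$, and in each case to compute the divisiveness summand-by-summand. For a fully polarised profile, the two sub-populations have exactly opposite rankings, so for \emph{every} pair $b \neq a$ the sub-population $\N_{a \succ b}$ is precisely one of the two halves (say the one ranking $\succ_1$), and $\N_{b \succ a}$ is the other half. This is the key structural observation: the sub-population induced by any pairwise comparison with $a$ is always the same partition, namely the two halves. Consequently $\P_{a \succ b} = \P_{\{1,\dots,n/2\}}$ is itself a unanimous profile with ranking $\succ_1$, and $\P_{b \succ a}$ is a unanimous profile with ranking $\succ_2$.

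For the even case with $s = \borda$: since $a$ is top-ranked in $\succ_1$, in the unanimous sub-profile $\P_{a \succ b}$ we have $\borda(a, \P_{a\succ b}) = 1$ (it beats all $m-1$ other issues, every agent), and symmetrically $\borda(a, \P_{b \succ a}) = 0$ (it is last). Hence $\Div^\borda(a, \N_{a\succ b}, \P) = |1 - 0| = 1$ for every $b \neq a$. It remains to check the $\alpha$-factor: because the split is exactly even, $\#(\N_{a \succ b}) = \#(\N_{b \succ a}) = n/2$, so with $\ell = 4$ the factor is $\left(4 \cdot \frac{(n/2)^2}{n^2}\right)^\alpha = 1^\alpha = 1$. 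Thus every summand equals $1$, and averaging over the $m-1$ issues $b$ gives $\Div_\alpha^\borda(a,\P) = 1$.

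For the odd case with $s = \copeland$: now one half has $(n+1)/2$ agents and the other $(n-1)/2$. Again fix $b \neq a$; the sub-population $\N_{a \succ b}$ is one of the two halves and is internally unanimous with ranking $\succ_1$, so $\copeland(a, \P_{a \succ b}) = 1$ (in a unanimous profile $a$ wins every majority contest it should, i.e.\ all $m-1$ of them), while $\copeland(a, \P_{b \succ a}) = 0$. So $\Div^\copeland(a, \N_{a\succ b}, \P) = 1$ for every $b$, independently of $b$. The $\alpha$-factor is now $\left(4 \cdot \frac{((n+1)/2)((n-1)/2)}{n^2}\right)^\alpha = \left(\frac{n^2-1}{n^2}\right)^\alpha$, which is exactly the $\alpha$-factor referred to in the statement, and it does not depend on $b$ either. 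Averaging $m-1$ identical summands each equal to (that $\alpha$-factor) $\times 1$ gives $\Div_\alpha^\copeland(a,\P)$ equal to the $\alpha$-factor, as claimed.

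The only mild subtlety — and the step I would be most careful about — is verifying that in a unanimous (single-ranking) sub-profile the Borda and Copeland scores of the top issue are exactly $1$ and of the bottom issue exactly $0$; this is immediate from the normalisations given in the Preliminaries, but it is worth stating explicitly since the whole argument rests on each pairwise-induced sub-population being internally unanimous. One should also note the harmless fact that the value is independent of which of the $m-1$ issues $b$ is chosen, which is what lets the average collapse to a single summand's value.
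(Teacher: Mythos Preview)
Your argument is correct and follows essentially the same route as the paper's proof: in both cases you observe that for every $b\neq a$ the induced sub-populations $\N_{a\succ b}$ and $\N_{b\succ a}$ coincide with the two unanimous halves, so the score difference is $1$ and the summands are identical; the paper records exactly this, and computes the same $\alpha$-factor $\left(\frac{n^2-1}{n^2}\right)^\alpha$ in the odd case. Your write-up is simply more explicit about why unanimity forces $\borda(a,\cdot)\in\{0,1\}$ and $\copeland(a,\cdot)\in\{0,1\}$, which is a welcome clarification but not a different method.
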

\begin{proof}
If $\P$ is fully polarised and $n$ is even, then each summand of $\Div_\alpha^\borda(a, \P)$ is 1, as all agents  in $\N_{a\succ b}$  rank $a$ first,  all  in $\N_{b\succ a}$ rank $a$ last, and $\#(\N_{a\succ b}){=}\#(\N_{b\succ a})$. 
Thus, $\Div^\borda_\alpha(a, \P){=}1$. 
If $\P$ is fully polarised and $n$ is odd, then $a$ is a Condorcet winner in the larger sub-population and a Condorcet loser in the other. Thus, $\Div^\score(a,\N_{a>b},\P){=}1$ for each $b \in \I\backslash\{a\}$. As sub-populations differ by $1$, $\Div_\alpha^\copeland(a, \P)=\left(\frac{n^2-1}{n^2}\right)^\alpha$ when $\ell=4$. 
\end{proof}

In fully polarised profiles, two issues have maximal divisiveness: the top-ranked issues in  $\succ_1$ and $\succ_2$. Moreover, in any profile, at most two issues can have maximal divisiveness.


Finally, \emph{uniform} profiles contain each of the $m!$ possible rankings over the $m$ issues, and each ranking is equally represented in the profile (note that $n$ is even). They represent a fully noisy population of preferences.
While the measure of polarisation proposed by \citet{can2015measuring} cannot distinguish between uniform and fully polarised profiles, we show that the ranking of divisiveness is strict in the former while in the latter all issues have the same divisiveness. The following proposition is straightforward due to the symmetry of uniform profiles:

\begin{proposition}
If $\P$ is a uniform profile, then $\Div^\score_\alpha(a,\P)=\Div_\alpha^\score(a,\P)$ for all $a,b\in \I$.
\end{proposition}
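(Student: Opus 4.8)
The statement is (modulo what is clearly a typo, the right-hand side should read $\Div_\alpha^\score(b,\P)$) the assertion that in a uniform profile all issues have the same divisiveness. The plan is to derive this from the full symmetry of a uniform profile under relabelling of the issues, combined with the neutrality of $\Div^\score_\alpha$ recorded at the beginning of Section~\ref{sec:measuringdiv} — so, as there, under the standing assumption that $\score$ is neutral (which holds for $\borda$ and $\copeland$, hence for all cases of interest).

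First I would make precise that a uniform profile is invariant under issue permutations. Given a permutation $\sigma$ of $\I$, let $\sigma(\P)$ be the profile obtained by renaming the issues according to $\sigma$ in every individual ranking. Then $\sigma$ merely permutes the $m!$ linear orders in $\mathcal L(\I)$ among themselves, so in $\sigma(\P)$ each order still occurs with the same common multiplicity $n/m!$; hence $\sigma(\P)=\P$ as anonymous (weighted) profiles. Next, for arbitrary issues $a,b\in\I$, pick any $\sigma$ with $\sigma(a)=b$ (e.g.\ the transposition of $a$ and $b$). Neutrality of $\Div^\score_\alpha$ gives $\Div^\score_\alpha(a,\P)=\Div^\score_\alpha(\sigma(a),\sigma(\P))=\Div^\score_\alpha(b,\P)$, where the last step uses $\sigma(\P)=\P$ and $\sigma(a)=b$. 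Since $a,b$ were arbitrary, all issues are equally divisive.

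There is essentially no obstacle here; the only point needing a word of care is that the neutrality of $\Div^\score_\alpha$ was only established for neutral $\score$, so the proposition should be read with that assumption in force. One could instead argue by direct computation — in a uniform profile $\#(\N_{a>b})=\#(\N_{b>a})=n/2$ for every pair, so every $\alpha$-factor equals $(\ell/4)^\alpha$, and $\P_{a\succ b}$ consists of exactly the linear orders placing $a$ above $b$, each with equal multiplicity, symmetrically for $\P_{b\succ a}$, so each summand $\Div^\score(a,\N_{a\succ b},\P)$ depends only on data common to all pairs of distinct issues — but the neutrality argument is shorter and is the one I would write up.
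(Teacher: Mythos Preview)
Your proposal is correct and matches the paper's own justification, which is simply the one-line remark that the result is ``straightforward due to the symmetry of uniform profiles''; you have just spelled out that symmetry argument carefully via neutrality and invariance of uniform profiles under issue permutations. Your observation that the statement implicitly requires $\score$ to be neutral is also apt, as the paper records neutrality of $\Div^\score_\alpha$ only under that hypothesis.
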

For a uniform profile $\P$, $\Div_\alpha^\borda(a,\P)=\frac{1}{m-1}$ when $\ell=4$, as the average Borda score between two sub-populations $N_{a\succ b}$ and $\N_{b\succ a}$ differs by $\nicefrac{1}{m-1}$. Whereas $\Div_\alpha^\copeland(a,\P)=1$ when $\ell=4$, as the divide of the two sub-populations always ensures that $a$ always wins every majority contest in $\N_{a\succ b}$ and always loses in $\N_{b\succ a}$.

\begin{figure}[t!]
    \centering
    \includegraphics[width=0.6\columnwidth]{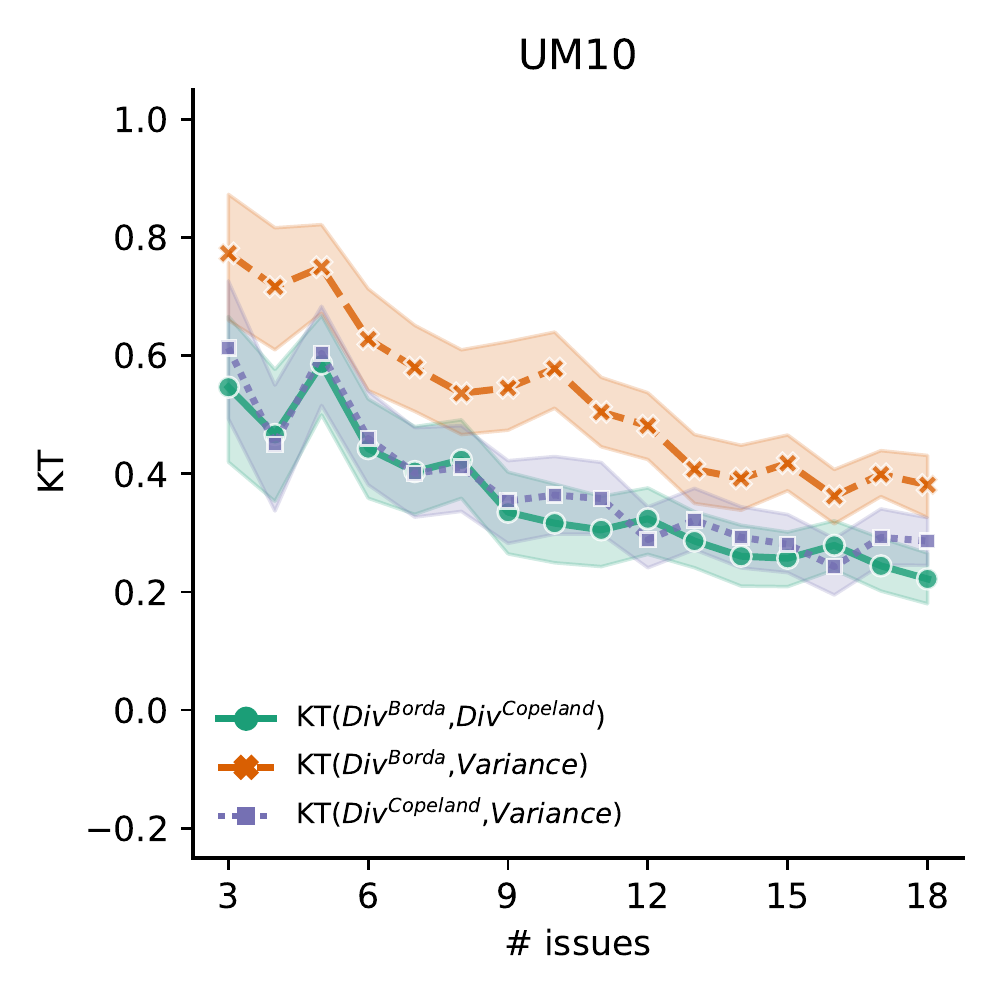}
    \caption{The average Kendall's tau correlation between each pair of divisiveness rankings using Borda and Copeland with $\alpha=0$, and the rank variance, varying the number of issues. The average is taken from $100$ profiles with $100$ agents generated by UM10.}
    \label{fig:divvar1}
\end{figure}

\subsection{Divisiveness and Rank Variance}\label{sec:divVar}

We conducted experiments on synthetic preference profiles to test whether divisiveness with $\alpha=0$ correlates with the rank-variance defined in Section~\ref{sec:variance}.
We computed Kendall's tau correlation (KT) of $\Div_0^\borda$ and $\Div_0^\copeland$ with the rank variance (cf. Section~\ref{sec:variance}). 
We tested 100 profiles of rankings generated via the impartial culture (IC) and the Urn model with a correlation of 10\% and 50\% (named UM10 and UM50, respectively). Rankings were generated using the PrefLib library \citep{mattei2017apreflib,MaWa13a}.

Using the Urn model with 10\% correlation as an example, we plot in Figure~\ref{fig:divvar1} the average Kendall's tau correlation. We observe that the three measures are correlated when profiles are on a few issues but that the values of correlation decrease significantly as we increase the number of issues, and therefore increase the possible rankings for the measures to return. The correlation between the rank variance and the divisiveness computed using Borda is higher than using Copeland. The correlation between divisiveness using Borda and divisiveness using Copeland shows a similar decreasing trend.
Similar results are captured for the impartial culture scenario, but the correlation decreases even more for the case of $KT(\Div_0^\borda,\Div_0^\copeland)$ and $KT(\Div_0^\copeland,\mathit{Variance})$.
A possible explanation of this decreasing correlation is that rank-variance only considers an issue's position in each individual ranking discarding which the rankings' structures (which are ranked above or below the issue in question). Additional details and figures can be found in the Appendix.

\subsection{Maximally Divided Sub-Populations}

In addition to providing a ranking of issues based on their divisiveness, Definition~\ref{sec:basicdef} can also be used to identify the partition that maximally divides the population for an issue. 
This is a seemingly hard computational problem, as there is an exponential number of sub-populations to consider, but we show that it can be solved efficiently for the Borda score.

\begin{proposition}
For any profile $\P$ and issue $a\in\I$, finding the sub-population $\X\subseteq \N$ that maximises $\Div_0^\borda(a,\X,\P)$ can be done in polynomial time.
\end{proposition}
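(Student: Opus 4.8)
The plan is to reduce the (superficially exponential) search over sub-populations to a one-dimensional optimisation that is solvable by sorting. First I would rewrite the objective so as to expose what actually matters about $\X$. For an agent $i$, let $s_i = m - \Rank{a}{\succ_i}$ be the unnormalised Borda contribution that $i$ gives to $a$; summing the definition of $\borda$ over all issues $b\neq a$ shows that $\borda(a,\P_\X) = \frac{1}{|\X|(m-1)}\sum_{i\in\X}s_i$ for every non-empty $\X\subseteq\N$. Writing $S=\sum_{i\in\N}s_i$ and $S_\X=\sum_{i\in\X}s_i$, the quantity to be maximised, $\Div^\borda(a,\X,\P)=|\borda(a,\P_\X)-\borda(a,\P_{\N\setminus\X})|$, becomes $\frac{1}{m-1}\bigl|\,S_\X/|\X| - (S-S_\X)/(n-|\X|)\,\bigr|$, so it depends on $\X$ only through the pair $(|\X|,\,S_\X)$.

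Next I would fix the size $k=|\X|\in\{1,\dots,n-1\}$ and drop the absolute value by treating its two signs separately (they are interchanged by swapping $\X$ with $\N\setminus\X$). For fixed $k$, the bracketed expression equals $\bigl(\tfrac1k+\tfrac1{n-k}\bigr)S_\X - \tfrac{S}{n-k}$, which is strictly increasing in $S_\X$. Hence the best $\X$ of size $k$ for the ``$+$'' sign is any set of $k$ agents with the largest $s_i$-values, and for the ``$-$'' sign any set of $k$ agents with the smallest $s_i$-values. Equivalently, a one-step exchange argument shows that in an optimal $\X$, after possibly swapping $\X$ with $\N\setminus\X$, every agent inside $\X$ has an $s_i$-value at least that of every agent outside it; so an optimal $\X$ is always a ``threshold'' set consisting of the $j$ agents with largest $s_i$, for some $j\in\{1,\dots,n-1\}$.

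The algorithm therefore sorts the $n$ agents by $s_i$, precomputes prefix sums of the sorted sequence, and for each split point $j\in\{1,\dots,n-1\}$ evaluates $\Div^\borda(a,\X_j,\P)$ where $\X_j$ is the set of the $j$ agents with largest $s_i$; it returns the best of these $n-1$ candidates, which by the previous paragraph includes a global maximiser. This runs in $O(n\log n)$ time, which is polynomial. The only step that needs genuine care is the reduction in the first paragraph together with the monotonicity claim, i.e.\ verifying that among sub-populations of a fixed size the objective is governed solely by $S_\X$, so that sorting by $s_i$ really does suffice; once that is in place, correctness and the running-time bound are immediate, and I do not expect any further obstacle.
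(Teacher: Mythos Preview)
Your argument is correct and follows essentially the same route as the paper: both observe that $\borda(a,\P_\X)$ depends only on the ranks of $a$ in the agents of $\X$, use an exchange/monotonicity argument to show that an optimal $\X$ is a threshold set with respect to those ranks, and then enumerate the $n-1$ resulting candidates. Your version is somewhat more explicit (deriving the closed form $\borda(a,\P_\X)=\tfrac{1}{|\X|(m-1)}\sum_{i\in\X}s_i$, handling the absolute value by symmetry, and giving the sharper $O(n\log n)$ running time via prefix sums), but the underlying idea is the same.
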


\begin{proof}
Consider an arbitrary preference profile $\P$.
For an arbitrary issue $a\in \I$, we will use the following algorithm to find the sub-population $\X$ such that $\Div_0^\borda (a, \X, \P)$ is maximal. 
We first order the agents with respect to their ranking of issue $a$. Thus, without loss of generality, we can assume that for each $i\in [1,n-1]$ we have that ${\Rank{a}{\succ_{i}}} \geq \Rank{a}{\succ_{i+1}}$. 
Note that if two agents rank $a$ at the same level, their ordering is irrelevant.
We will now prove that any sub-population $\X$ that gives the maximum value of $\Div^\borda (a, \X, \P)$ will partition $\N$ such that for some $k\in [1,n-1]$ we have that $\X=\{1, \cdots k\}$ and thus $\N\backslash \X=\{k+1, \cdots, n\}$. 
Our polynomial algorithm then tests each of the $n-1$ partitions defined by $\X_k=\{1, \cdots, k\}$ and returns the one that maximises $\Div_0^\borda (a, \X_k, \P)$. 
Clearly, in each of these partitions $\X_k$, calculating $\Div_0^\borda (a, \X_k, \P)$ can be done in polynomial time.

We now prove that any $\X$ that maximises $\Div_0^\borda (a, \X, \P)$ is of the form $\X_k=\{1, \cdots, k\}$ for some $k$.
To do so, we consider some $\X$ such that there exists some $i\in \X$ yet $i-z\notin\X$, for $z\in [1, i-1]$. We need to show that the set $\X'=(\X\backslash\{i\})\cup\{i-z\}$  is more divisive than $\X$, thus  $\Div_0^\borda (a, \X', \P)\geq \Div_0^\borda (a, \X, \P)$.  
It is clear that $\borda(a, \P_{\X})\leq \borda(a,\P_{\X'})$ due to our assumption on the ordering of the agents in $\P$ in decreasing order of rank of $a$.  
For similar reasons, we have that $\borda(a, \P_{\N\backslash\X})\geq \borda(a,\P_{\N\backslash\X'})$. 
Therefore, it must be the case that $\Div_0^\borda (a, \X', \P) \geq\Div_0^\borda (a, \X, \P)$. 
Thus, we can transform any $\X$ that maximises divisiveness into a sub-population of the form $\X_k$ by performing a finite number of swaps, as described above.
\end{proof}

Determining the complexity of finding maximally divided sub-populations under $\Div_0^\copeland$ does not seem trivial, and we leave it as an open problem.

\section{Divisiveness Control}\label{sec:manip}

Manipulation and control have been widely studied for rank aggregation procedures. Manipulation is when an individual misrepresents their reported ranking to improve the score of their favourite candidate. A classical result showed that manipulation can be performed in polynomial time for the Borda score \citep{bartholdi1989computational}. Instead, control is when an external agent aims at altering the score of a designated candidate by performing actions such as removing agents or candidates from the profile of ranking. To give an example, preventing a candidate from being the Copeland winner by adding new rankings was shown to be a computationally hard problem.
For an introduction to the computational complexity of these problems, we refer to the survey from \citet{faliszewski2016control}. 
In this section, we focus on two approaches to control the measure of divisiveness: (i) by removing pairwise comparisons from the agents' rankings and (ii) by adding new agents (which could, e.g., be performed by \emph{bots} on any platform that crowdsources individual preferences). 
This section focuses on the less studied divisiveness measure with $\alpha{=}0$ and hence we omit $\alpha$ from $\Div^s_\alpha$.


\subsection{Removing Pairwise Comparisons}\label{sec:robustness}

This section studies the disruption of the divisiveness measure by the deletion of pairwise comparisons from the rankings. 
This can be thought of as sabotage, as the control actions here do not have a clear goal, e.g., making a single issue the most divisive one. Instead, the aim of this control problem is to disrupt the divisiveness ranking such that it no longer resembles the ranking under complete preferences.
To do so, we evaluate through simulations what percentage of pairwise comparisons of the agents' full rankings are required to be able to compute the divisiveness measure accurately. 

As we are removing parts of the rankings given by the agents, we need to compute divisiveness on incomplete rankings as in the original definition by \citet{navarrete2022understanding}.  When $\score=\copeland$, we see that the definition of divisiveness is well-defined on incomplete rankings. However, on incomplete rankings, we use the win rate instead of $\borda$ when calculating the divisiveness, noting that they are equivalent on complete rankings. We define the win rate of an issue $a$ to be $\sum_{b\in \I\backslash \{a\}} \frac{\#(\N_{a>b})}{\#(\N_{a\succ b }\cup\N_{b\succ a })\cdot (m-1)}$.

\begin{figure}[t!]
    \centering
    \includegraphics[width=0.6\columnwidth]{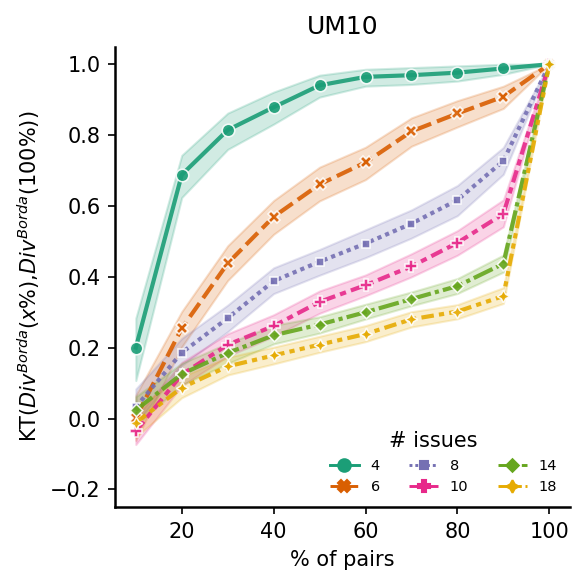}
    \caption{The average Kendall's tau correlation between the divisiveness ranking under $\Div_0^\borda$ computed on the full ranking with respect to incomplete rankings. The horizontal axis shows the percentage of pairwise comparisons of the incomplete profile with respect to the complete one. The experiment is run on 100 preference profiles drawn from UM10. The different markers on the lines represent different numbers of issues in $\{4,6,8,10, 14, 18\}$}
    \label{fig:robust}
\end{figure}

We generated 100 profiles for each of the three preference generation methods IC, UM10, and UM50, varying the number of issues $m\in [3,18]$.
We compared the average Kendall's tau correlation between the divisiveness ranking computed on the full rankings and the adapted measure of divisiveness computed on sub-profiles containing $X\%$ of the pairwise comparisons of the complete one ($X\in[10,100]$ increasing by increments of $10$). Here, pairwise comparisons are deleted from the profile at random. We highlight a single figure from these simulations to illustrate and leave the remainder of the figures in the Appendix. 

The main message of this simulation is that disrupting divisiveness by deleting pairwise comparisons is easy. 
Figure~\ref{fig:robust} focuses on the case when the average KT is taken over $100$ profiles with $100$ agents created via the UM10 method. 
It shows that if there are sufficiently many issues (say more than 10), then deleting between 10 and 20\% of the pairwise comparisons in the profile is sufficient to significantly decrease the accuracy of divisiveness (the correlation between complete and incomplete divisiveness is below $0.5$).
We also observe an inversion of the curves when the number of issues exceeds $6$, tending towards an exponential shape. 
Our findings imply that with a large number of issues and under the assumption of moderately correlated preferences, the almost-totality of the pairwise comparisons needs to be elicited from agents to obtain an accurate measure of divisiveness.

\subsection{Control by Adding Rankings}\label{sec:controladd}

The next form of control we study is the addition of fake rankings by an external agent. 
Modelling this type of control is particularly realistic when the divisiveness measures are used in online forums, where attacks by bots are commonplace (such as in the experimental setting of \citet{navarrete2022understanding}). 
Similar problems were previously studied in the literature, such as Sybil attacks in online elections~\citep{MeirTSS22}. 
We start by presenting an example in which a single agent is able to alter the divisiveness ranking.

\begin{example}\label{ex:manip}
Consider four agents and four issues $\I=\{a,b,c,d\}$. Consider that three agents have submitted their preferences, one agent with $\succ_1$, and two with $\succ_2$. The fourth agent has the truthful preference of $\succ_3$:
 \begin{center}
 \begin{tabular}{l|l}
$\succ_1$ (1 agent) &  $bcad$\\
$\succ_2$ (2 agents) & $abcd$ \\
$\succ_3$ (1 agent) & $acbd$
\end{tabular}  
\end{center}
 If the fourth agent submits their truthful preference, we have that:  $\Div^\borda(a,\P)=\nicefrac{12}{27}$,   $\Div^\borda(b,\P)=\nicefrac{7}{27}$, $\Div^\borda(c,\P)=\nicefrac{8}{27}$, and $\Div^\borda(d,\P)=0$. 
Thus, $a$ is currently the most divisive issue. 
As all agents agree that $d$ is the worst of all the issues, it is the least divisive issue.   

If the agent with truthful preference $\succ_3$ wants to manipulate the divisiveness measure in order to make the issue $b$ more divisive, then they can submit a preference $\succ_3'=cadb$ giving profile $\P'=(\succ_1, (2, \succ_2), \succ_3')$.  
In doing so, the measure of divisiveness changes as such: $\Div^\borda(a,\P')=\nicefrac{19}{54}$,   $\Div^\borda(b,\P')=\nicefrac{38}{54}$, $\Div^\borda(c,\P')=\nicefrac{19}{54}$, and $\Div^\borda(d,\P')=\nicefrac{6}{54}$.
Thus, by submitting $\succ_3'$, the agent succeeded in making $b$ the most divisive issue.
\end{example}

Example~\ref{ex:manip} shows that one agent can manipulate the divisiveness measure. 
We now show that the problem of control by adding rankings can be solved easily by showing a simple heuristic, which we call $\inject{\score}$. 

$\inject{\score}$ takes as input a profile $\P$ and an issue $a$ which it aims to make the most divisive by adding new agents to the profile. It first computes the ranking over issues defined by $\score(x,\P)$, which we denote by $\succ_\score$. 
This is used to create the two rankings which will be added to the profile to increase the divisiveness of $a$, namely $\succ_{odd}$ and $\succ_{even}$. 
The former modifies $\succ_\score$ by putting issue $a$ first and leaving the rest of the ranking unchanged. The latter modifies $\succ_\score$ symmetrically by putting $a$ in the last position and leaving the remaining part of the order unchanged. 
In this way, $\succ_{odd}$ and $\succ_{even}$ resemble the ranking of agreements computed using $\score$, with the only difference being the position of $a$ that alternates between first and last. $\inject{\score}$ then alternates between adding $\succ_{odd}$ and $\succ_{even}$ to profile $\P$ until the issue $a$ is the most divisive.
We show that $\inject{\score}$ always terminates and succeeds in making the target issue the most divisive.

\begin{proposition}
 $\textsc{Inject}_\borda$ always terminates for $\alpha=0$.
\end{proposition}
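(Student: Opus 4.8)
The plan is to show that after each pair of additions ($\succ_{odd}$ followed by $\succ_{even}$), the divisiveness of the target issue $a$ strictly increases toward its maximum of $1$, while the divisiveness of every other issue is bounded away from that maximum, so that after finitely many rounds $a$ becomes the unique most divisive issue. First I would set up notation: let $\P_t$ be the profile after $t$ rounds of adding the pair $(\succ_{odd},\succ_{even})$, so $\P_t$ has $n+2t$ agents, with exactly $t$ extra agents ranking $a$ first and $t$ extra agents ranking $a$ last. For any issue $b\neq a$, I would analyse the sub-population $\N_{a\succ b}$ in $\P_t$: it contains the $t$ new $\succ_{odd}$-agents (who rank $a$ first, hence contribute Borda score $1$ each) and possibly some original agents, while $\N_{b\succ a}$ contains the $t$ new $\succ_{even}$-agents (who rank $a$ last, contributing $0$) plus possibly some original agents. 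Since the original contributions are bounded by constants independent of $t$, as $t\to\infty$ we get $\borda(a,\P_{t,\,a\succ b})\to 1$ and $\borda(a,\P_{t,\,b\succ a})\to 0$, hence $\Div^\borda(a,\N_{a\succ b},\P_t)\to 1$ for each $b$, and therefore $\Div^\borda(a,\P_t)\to 1$.

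Next I would bound the divisiveness of the other issues away from $1$. The key observation is that $\succ_{odd}$ and $\succ_{even}$ agree with each other on the relative order of every pair of issues not involving $a$ (both inherit that order from $\succ_\score$). Fix $b\neq a$ and some third issue $c\notin\{a,b\}$ (here we use $m\geq 3$; the degenerate small-$m$ cases can be checked directly). For the pair $(b,c)$, all $2t$ new agents fall on the same side of the comparison $b$ vs.\ $c$ — say all in $\N_{b\succ c}$. Then in $\N_{b\succ c}$, which has size $\geq 2t$, the new agents contribute a bounded total Borda increment to $b$ (since $a$ sits either at the very top or very bottom, $b$'s position among the $2t$ new agents is essentially fixed up to a shift by one), so the average Borda score of $b$ over $\N_{b\succ c}$ converges to a fixed value $q\in[0,1]$ as $t\to\infty$; meanwhile $\N_{c\succ b}$ stays of bounded size, so $\borda(b,\P_{t,\,c\succ b})$ also stays within a bounded range. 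The upshot is that $\Div^\borda(b,\N_{b\succ c},\P_t)$ is bounded — crucially it does not tend to $1$, and in fact the whole average $\Div^\borda(b,\P_t)$ stays bounded by some constant $c^\ast<1$ (one can take, e.g., any bound reflecting that at least one summand stays well below $1$). Combined with $\Div^\borda(a,\P_t)\to1$, this guarantees a finite $t$ at which $\Div^\borda(a,\P_t)>\Div^\borda(b,\P_t)$ simultaneously for all $b\neq a$, so $\inject{\borda}$ halts.

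The main obstacle I anticipate is the bookkeeping for the "bounded away from $1$" step: one must verify that for \emph{every} non-target issue $b$ there is \emph{some} comparison pair on which its divisiveness summand is strictly and uniformly below $1$, and that this is not washed out by averaging with the at-most-$1$ contributions of the other $m-2$ summands. Since there are only $m-1$ summands in the definition of $\Div^\borda_0$, a single summand bounded by, say, $\beta<1$ forces $\Div^\borda(b,\P_t)\le \tfrac{m-2}{m-1}+\tfrac{\beta}{m-1}<1$, which suffices; the work is in pinning down $\beta$ uniformly in $t$, which follows from the ratio of the bounded original contribution to the growing denominator $\#(\N_{b\succ c}\cup\N_{c\succ b})\ge 2t$. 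A secondary point to handle cleanly is ties in $\succ_\score$ (break them by any fixed rule, as the heuristic does implicitly) and the edge cases $m=2$ and tiny $n$, which are immediate. I would also remark that the argument only uses monotone growth, not the exact alternation, so termination is robust.
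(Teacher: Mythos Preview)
Your proposal is correct and follows essentially the same approach as the paper: show that $\Div^\borda(a,\P_t)\to 1$ because the injected agents place $a$ first in $\N_{a\succ b}$ and last in $\N_{b\succ a}$, while for every other issue $b$ the divisiveness is bounded away from $1$ because $b$'s rank in $\succ_{odd}$ and $\succ_{even}$ differs by only one position. The paper's proof is much terser on the second step (it simply asserts that the one-position variation prevents $\Div^\borda(b,\P_t)$ from tending to $1$), whereas you spell out the mechanism via the $(b,c)$ summand and the averaging bound $\tfrac{m-2+\beta}{m-1}$; but the underlying idea is the same. One small point to tighten in your write-up: when you say the limit $q$ lies in $[0,1]$, you should note explicitly that in fact $q\in(0,1)$ strictly, since $a$ occupies the first position in $\succ_{odd}$ and the last in $\succ_{even}$, so $b$ can be neither always first nor always last among the injected agents---this is what guarantees $|q-\text{const}|<1$ and hence your $\beta<1$.
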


\begin{proof}
Let $a$ be the target issue. If $a$ is already the most divisive issue in $\P$ then $\textsc{Inject}_\borda$ terminates immediately.
Else, we first prove that  $\Div^\borda(a, \P')>\Div^\borda(a, \P)$ for any profile $\P'$ obtained by using $\inject{\borda}$ on a profile $\P$.
Take an arbitrary profile $\P$, issue $a\in \I$, and $k$ such that $\P'=(\P, (k,\succ_{odd}),(k, \succ_{even}))$, where $\N'$ are the agents in $\P'$. 
Take an arbitrary $b\in \I\backslash \{a\}$,  we have that $\Div^\borda(a,\N_{a\succ b }', \P')>\Div^\borda(a,\N_{a\succ b}, \P)$ as $\borda(a, \P'_{a\succ b})\geq \borda(a, \P_{a\succ b})$ and $\borda(a, \P_{b\succ a})\geq \borda(a, \P'_{b\succ a})$, with at least one of the two inequalities being strict. To see this, observe that all injected agents rank $a$ the highest in $\P'_{a\succ b}$ and the lowest in $\P'_{b\succ a}$ (and $a$ is not the most divisive issue).
Thus, we proved that $\Div^\borda(a,\P')$ tends to 1 as $k$ increases.
To conclude, note that the rank of any issue $b\in\I\backslash\{a\}$ in the injected sub-profile ${((k,\succ_{even}),}(k, \succ_{odd}))$ varies only by one position. Thus, with $k$ large enough the divisiveness of any issue $b\not=a$ cannot tend to $1$.
%
\end{proof}

\begin{proposition}
 $\textsc{Inject}_\copeland$ always terminates in polynomial time for $\alpha=0$.
\end{proposition}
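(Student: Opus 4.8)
The plan is to mirror the proof of the Borda case, exploiting the fact that the Copeland score takes values in the finite set $\{0,\tfrac1{m-1},\dots,1\}$, so that the target issue will actually \emph{attain} the maximum value $\Div^\copeland(a,\P')=1$ after only linearly many injection steps rather than merely tending to it.

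First I would fix the target issue $a$, set $k=n+1$, and examine the profile $\P'$ obtained from $\P$ by injecting $k$ copies of $\succ_{odd}$ and $k$ copies of $\succ_{even}$ (so, after at most $2(n+1)$ one-at-a-time alternating injections). For any $b\neq a$, every copy of $\succ_{odd}$ ranks $a$ first and so belongs to $\N'_{a\succ b}$, while every copy of $\succ_{even}$ ranks $a$ last and so belongs to $\N'_{b\succ a}$. Hence $\P'_{a\succ b}$ consists of the $k$ copies of $\succ_{odd}$ together with at most $n$ of the original rankings; since $k>n$, issue $a$ is ranked first by a strict majority of $\P'_{a\succ b}$, so it wins every pairwise majority contest there and $\copeland(a,\P'_{a\succ b})=1$. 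Symmetrically $\copeland(a,\P'_{b\succ a})=0$. Therefore $\Div^\copeland(a,\N'_{a\succ b},\P')=1$ for every $b\neq a$, and averaging over $b$ gives $\Div^\copeland(a,\P')=1$, the maximal possible value. So $a$ is a most divisive issue in $\P'$ and $\inject{\copeland}$ halts.

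To guarantee termination even when the stopping test demands that $a$ be the \emph{unique} most divisive issue, I would further check that $\Div^\copeland(b,\P')<1$ for every $b\neq a$ once $m\geq 3$ (the case $m=2$ being immediate). Consider the summand indexed by $c=a$ in the average defining $\Div^\copeland(b,\P')$, namely $\Div^\copeland(b,\N'_{b\succ a},\P')=|\copeland(b,\P'_{b\succ a})-\copeland(b,\P'_{a\succ b})|$. In both $\P'_{b\succ a}$ and $\P'_{a\succ b}$ the $k$ relevant injected rankings (all of $\succ_{even}$, resp.\ all of $\succ_{odd}$) restrict to the common order $\succ_\copeland$ on $\I\setminus\{a\}$, so for $k>n$ issue $b$ beats a given $d\notin\{a,b\}$ in both sub-profiles exactly when $b\succ_\copeland d$; meanwhile $b$ beats $a$ in $\P'_{b\succ a}$ but loses to $a$ in $\P'_{a\succ b}$. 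Hence the two Copeland scores of $b$ differ by precisely $\tfrac1{m-1}$, so that summand equals $\tfrac1{m-1}<1$ and the whole average $\Div^\copeland(b,\P')$ is strictly below $1$.

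Finally I would collect the complexity bound. $\inject{\copeland}$ computes $\succ_\copeland$ once (polynomial, as Copeland is polynomially computable) and then performs at most $2(n+1)+O(1)$ injections; after each injection it recomputes $\Div^\copeland(x,\P')$ for every $x\in\I$ on a profile with at most $3n+O(1)$ rankings over $m$ issues, which is polynomial since $\Div^\copeland_\alpha$ is polynomially computable whenever $\score$ is. Hence the procedure runs in time polynomial in $n$ and $m$. I expect the main obstacle to be the middle step: keeping careful track of which injected rankings land in $\N'_{a\succ b}$ versus $\N'_{b\succ a}$, and confirming --- against the discreteness of the Copeland score --- that no other issue's divisiveness can also reach $1$; the first and last steps are routine once $k$ is taken larger than $n$.
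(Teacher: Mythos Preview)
Your proof is correct and follows essentially the same approach as the paper: inject $n{+}1$ copies of each of $\succ_{odd}$ and $\succ_{even}$ (the paper phrases this as $k=2n{+}2$ total), observe that $a$ becomes a Condorcet winner in every $\P'_{a\succ b}$ and a Condorcet loser in every $\P'_{b\succ a}$ so that $\Div^\copeland_0(a,\P')=1$, and then argue that no other issue can reach divisiveness $1$. Your uniqueness step is in fact more explicit than the paper's---you compute that the $c{=}a$ summand for any $b\neq a$ equals exactly $\tfrac{1}{m-1}$, whereas the paper simply invokes ``uniqueness of a Condorcet winner''---but the underlying idea and the resulting polynomial bound are the same.
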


\begin{proof}
Let $k=2n+2$ and $a$ be the target issue, where $n$ is the number of voters in $\P$. By definition, $\Div_0^\copeland(a, \P')=\nicefrac{1}{m-1}\sum_{b\in \I\backslash\{a\}} |\copeland(a, \P'_{a\succ b})-\copeland(a, \P'_{a\succ b})|$. 
As $k$ is sufficiently large, for all $b\in \I\backslash\{a\}$ we have that $a$ is a Condorcet winner in $\P'_{a\succ b}$, since $n+1$ copies of $\succ_{odd}$ were added with $a$ as the top issue. Symmetrically, $a$ is a Condorcet loser in $\P'_{b\succ a}$.
Thus, we have that $\copeland(a, \P_{\N_{a\succ b}})=1$ and $\copeland(a, \P_{\N_{b\succ a}})=0$, which in turn implies that $\Div_0^\copeland(a, \P')=1$. 
By the uniqueness of a Condorcet winner, we have that no other issue $b\not = a$ can have $\Div_0^\copeland(b, \P')=1$, concluding the proof.
\end{proof}

The previous results shows that $\inject{\score}$ can manipulate the divisiveness ranking. However, it does not provide a bound on how many agents $\inject{\borda}$ are required and only provides a large bound for $\inject{\copeland}$, namely $k=2n+2$. 
To complement this, we conducted simulations to estimate how many new agents  $\inject{\score}$ needs to alter the divisiveness ranking. 
For each $m\in[2,11]$ and each of three profile generation methods (IC, UM10, UM50), we considered $100$ profiles to test how many new agents $\inject{\score}$ required to make the target issue the most divisive. 
Figure~\ref{fig:manipbordaICwhathappens} focuses on IC profiles with $8$ issues. It shows the divisiveness rankings of the $8$ issues in the initial profile (at $0\%$) and their evolution when $\inject{\borda}$ inserts additional rankings to make the least divisive issue the most divisive (the highlighted line at the $8^\text{th}$ position).
In particular, by adding around $35\%$ new agents we can make the least divisive issue the most via our simple algorithm. 
In crowdsourcing applications with wide participation this percentage implies that too many additional agents might need to be injected without being noticed.
Yet, we recall that $\inject{\score}$ is a simple heuristic, and this percentage could be lower with a more efficient procedure. 
Furthermore, we see that the average divisiveness ranking of the other issues converges to the middle of the ranking. 
We obtained similar results by varying the number of issues and the profile generation methods (we give more details in the Appendix).

\begin{figure}[t]
    \centering
    \includegraphics[width=0.6\linewidth]{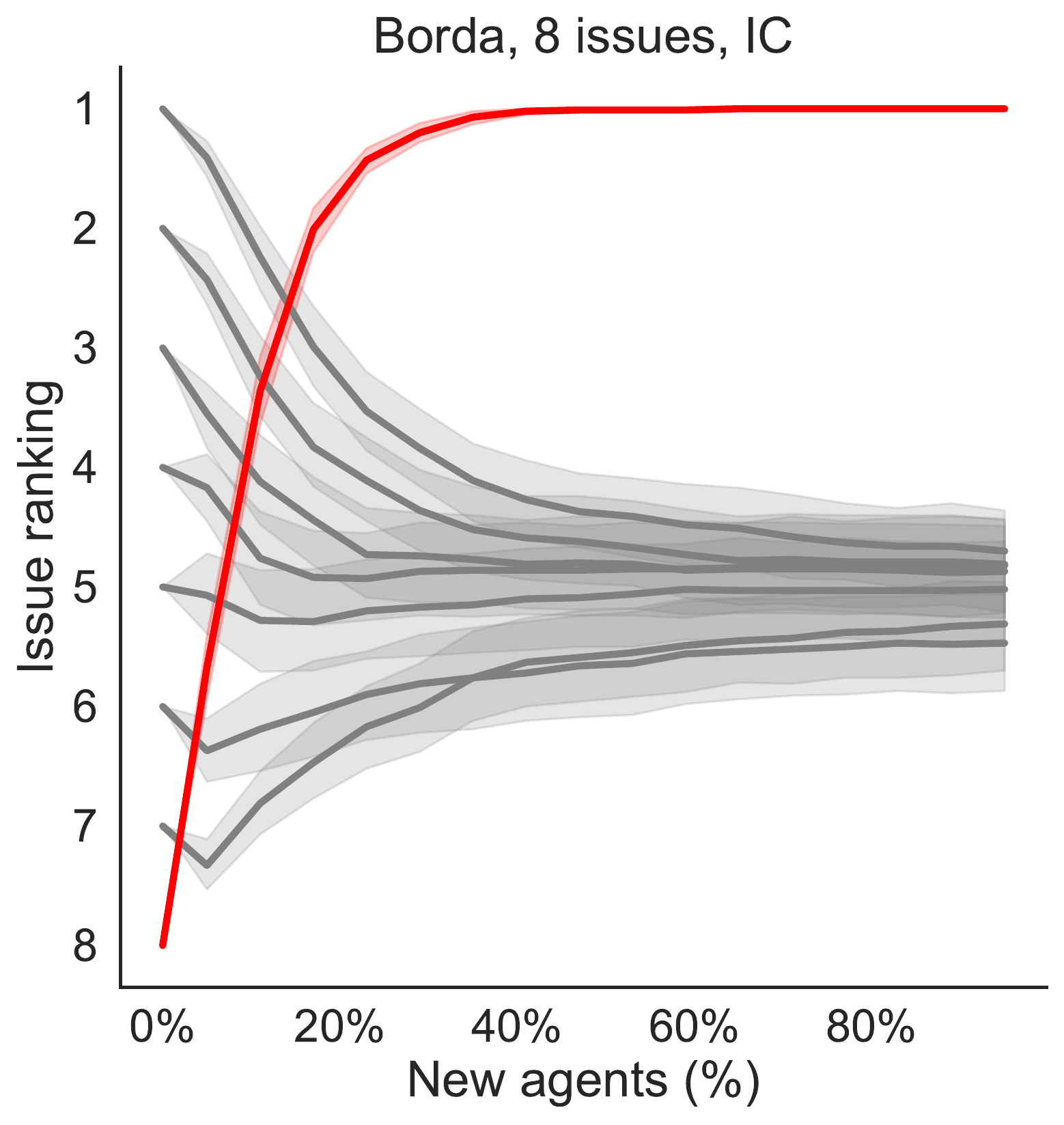}
    \caption{Percentage of new agents added by $\inject{\borda}$ to make the least divisive issue the most divisive. The average divisiveness of the other issues is also plotted in grey. Averages are computed over $100$ IC profiles of $100$ agents and $8$ issues.}
    \label{fig:manipbordaICwhathappens}
\end{figure}

We also tested how many additional agents $\inject{\borda}$ required to reach easier targets, such as making either the second most divisive issue or an issue in the middle of the divisiveness ranking the most divisive issue. 
Figure~\ref{fig:bordaManip} presents our findings for $\score=\borda$ and $m=8$, computed on $100$ profiles generated using either IC or UM50.
Clearly, if the task of control is harder, $\inject{\borda}$ needs additional agents to meet its target. 
More importantly, if we compare the performance of $\inject{\borda}$ on different preference generation methods, we see that more correlated profiles (using UM50 in our case) are harder to control no matter the target issue.

\begin{figure}[t]
    \centering
    \includegraphics[width=0.6\linewidth]{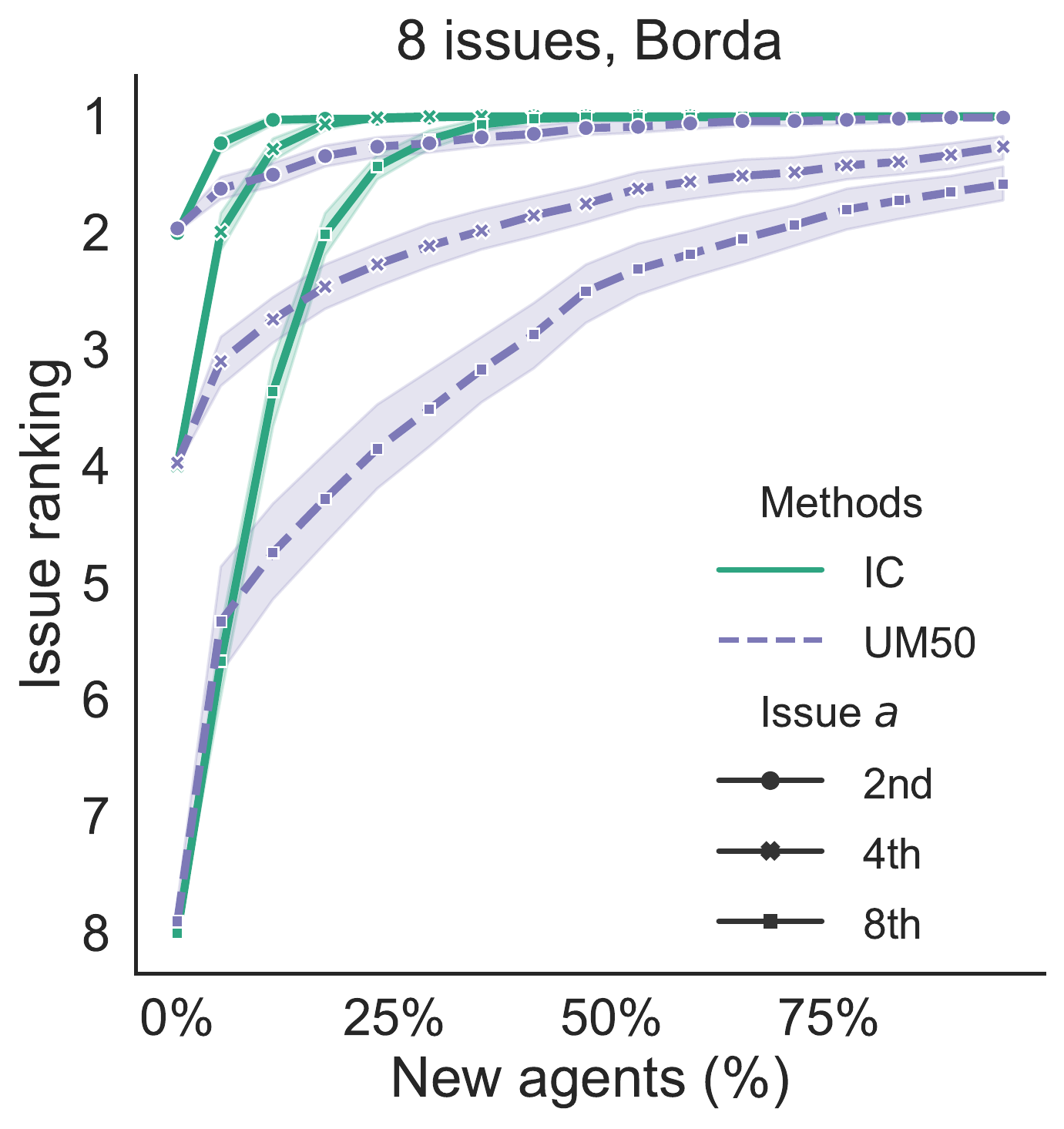}
    \caption{Percentage of agents added by $\inject{\borda}$ with $8$ issues, to make the  $2^\text{nd}$, $4^\text{th}$ or $8^\text{th}$ most divisive issue the most. The average is taken over 100 profiles for 100 agents generated using IC and UM50 (represented by a solid or dashed line, respectively).}
    \label{fig:bordaManip}
\end{figure}

Given the results, we see that $\inject{\score}$ can be an effective way of manipulating, but it simulates a static scenario. 
We also note that this way of manipulating requires very little information about the original profile, i.e., the controlling agent just need to know the current ranking of agreement given by the scoring $\score$. Furthermore, our results show that $\inject{\score}$ can be used just to increase an issue's rank in the divisiveness ranking, rather than insisting that it becomes the most divisive. 
 Additional details can be found in the  Appendix. 

\section{Conclusions and Future Work}

This paper extends the notion of divisiveness given by \cite{navarrete2022understanding} to a family of measures and applies them to complete rankings over issues. 
We ground these measures by highlighting their behaviour at limit cases and comparing them to other notions of disagreement and polarisation.
We also point out how we can find a sub-population for which an issue is most divisive in polynomial time when considering the Borda score, yet, no such algorithm was found for the Copeland score. 
The main contribution of this paper is the study of the robustness of the divisiveness measures to external control. We showed via simulations that by randomly removing pairwise comparisons from the rankings, the correlation between the divisiveness ranking of the full vs partial rankings can drop significantly, especially when there are many issues. 
Furthermore, we show that a simple algorithm can affect the divisiveness ranking by inserting (a possibly large number of) controlled fake rankings.  

This paper opens many directions for future work. First, how can our divisiveness measures be modified to be more robust to external control. Second, our divisiveness measures can be used to compare how divisive or polarised is a given population (instead of focusing on comparisons of single issues). Finally, following in the social choice theory tradition, we will explore axiomatic characterisations of divisiveness.

\bibliographystyle{named}
\bibliography{divisive}

\clearpage

\appendix

\begin{figure*}[t]
    \centering
    \includegraphics[width=0.55\textwidth]{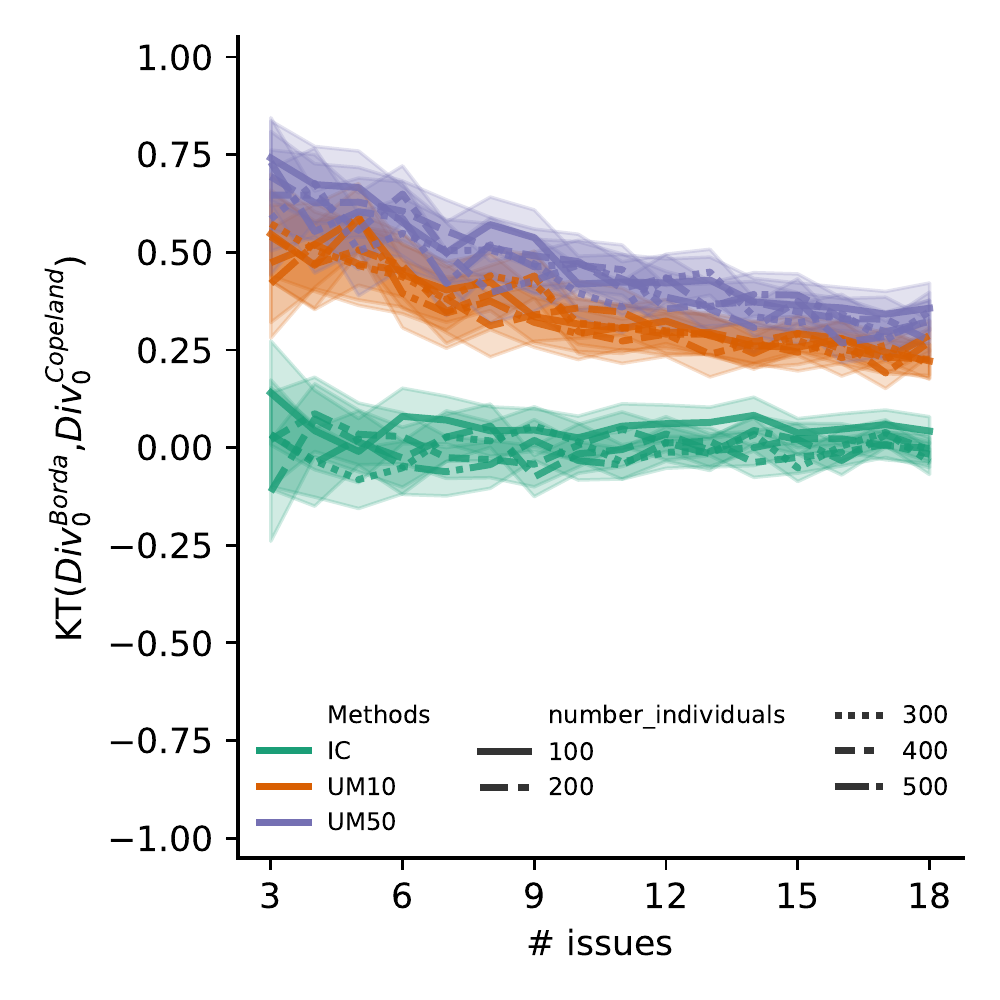}
    \includegraphics[width=0.55\textwidth]{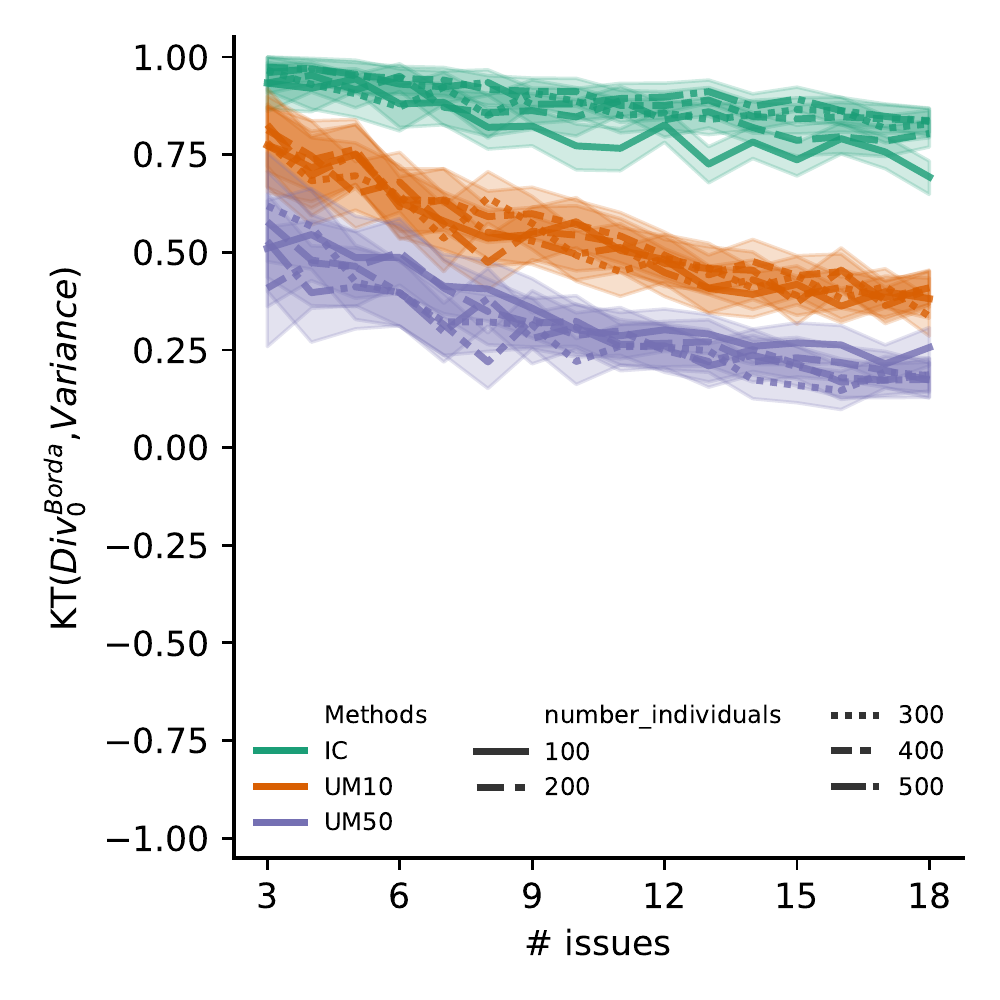}
    \includegraphics[width=0.55\textwidth]{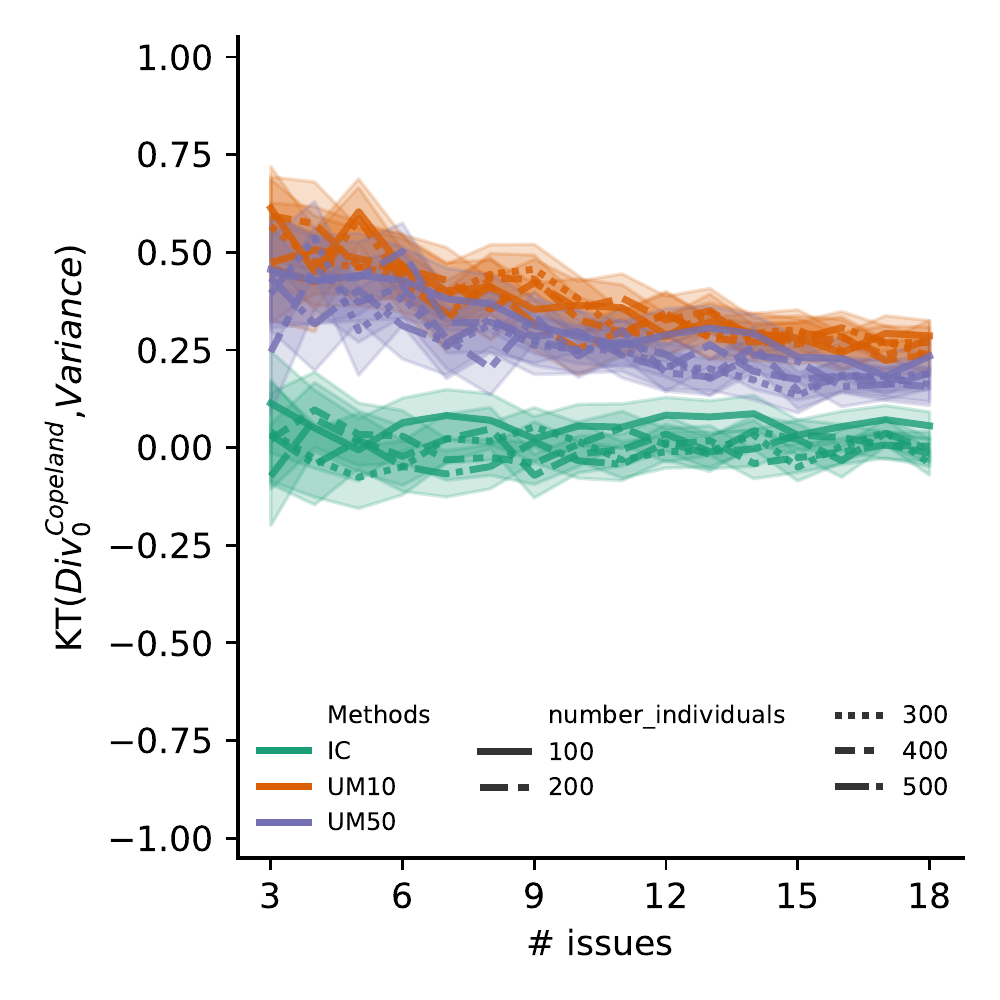}
    \caption{ Kendall's tau correlation between divisiveness using Borda and divisiveness using Copeland; divisiveness using Borda and rank-variance; divisiveness using Copeland and rank-variance (from top to bottom). We measure the correlation against the number of issues the $100$ agents have given a ranking on. The colours represent the methods used to create the rankings, and the style of the lines represents the number of agents $n\in \{100, 200, 300, 400, 500\}$. 
    }
    \label{fig:divvar}
\end{figure*}

\section{Appendix}

In all of our experiments, we built profiles of  $100$ agents, giving a ranking of the issues, the number of which varied from  $3$ to $18$. The profiles were created using three different methods, namely impartial culture (IC) and the urn culture with $10\%$ and $50\%$ correlation (UM10 and UM50) using the tools provided by PrefLib~\cite{MaWa13a,mattei2017apreflib,preflibtools}. 
IC means that when there are $m$ issues,  each of the possible $m!$ rankings are chosen uniformly. The IC method implies that there is not much correlation between the agents; thus, we look at a profile creation method which makes agents more likely to be similar to one another, namely the urn model. We look at the urn model with two values of the similarity parameter, namely being either $10\%$ or $50\%$ correlated. For the urn model, we start with the $m!$ possible rankings an agent could have over the issues. The similarity between agents is determined by this parameter which says how many of these given profiles are returned to the \emph{urn}. Thus to be $50\%$ correlated means that $m!$ of this profile will be returned to the urn, meaning that this ranking will be chosen half of the time when selecting the following ranking. Similarly, to be $10\%$ correlated, the model returns $\nicefrac{m!}{9}$ copies of the ranking. Hence, UM50 provides more correlated profiles, and IC gives us random profiles. Note that when $m>18$, the time to create the profiles via the urn model is difficult due to space and time constraints as $19!\approx 10^{17}$.
To avoid outliers, we take $100$ profiles in our simulations and will average the result for this desired parameter.

\subsection{Additional details of experiments in Section~\ref{sec:divVar}}\label{append:secvardiv}

In this section we give more details of the simulations conducted for Section~\ref{sec:divVar}. These simulations aim to show if there is any correlation between divisiveness (with $\alpha=0$) with either $\score=\borda$ or $\score=\copeland$ and the rank-variance (defined in Section~\ref{sec:variance}). The purpose of this is to test if (i) the choice of scoring function impacts the divisiveness ranking and (ii) if our metrics are, in fact, the rank-variance (a known notion from the social sciences). In the main paper show that the Kendall's tau (KT) correlation between pairs of the three metrics mentioned are positive, yet around a $0.5$ correlation. They differ when the $100$ profiles over $100$ agents are built using the UM10 method.

We give more details about the other methods in Figure~\ref{fig:divvar}. From left to right, we inspect the KT correlation between the ranking of the issues when using the divisiveness issue with either scoring, between divisiveness with the Borda scoring and the rank-variance, and between divisiveness with the Copeland scoring and the rank-variance. In each image, we inspect the effect on the value of the KT correlation when changing the number of issues. Furthermore, we vary the number of agents in the profile with $n\in\{100, 200, 300, 400, 500\}$. However, this seems to have little effect on the KT score. 

The general trend is that the correlation decreases slightly as the number of issues increases. This is to be expected, as when there are more issues, it becomes much harder to have the same ranking as more possible combinations.

One interesting takeaway from these figures is that the order of the lines of which profile creation method at the two measures more correlated changes depending on the variance. We see that $KT(\Div_0^\borda, \Div_0^\copeland)$ and $KT(\Div_0^\copeland, \Var)$ have that they are more correlated when using UM10 and UM50 to create the profiles and less correlated using IC. $KT(\Div_0^\borda, \Var)$ is more correlated with IC than UM10 and UM50.  
In particular, we see that the correlation $KT(\Div_0^\borda, \Var)$ is generally higher than the others. This could be to do with the similar nature of the measures, in that both use the position in the agents' rankings. In contrast, the correlation between either of the other two and Copeland is lower as it looks at which issues are being beaten. Furthermore, we conjecture that on IC profiles  $KT(\Div_0^\borda, \Var)$  is highly correlated as the size of the subgroups in divisiveness are generally equal, and thus, the impact of not using the group sizes is not coming into play.

\begin{figure*}[t]
    \centering
    \includegraphics[width=0.55\linewidth]{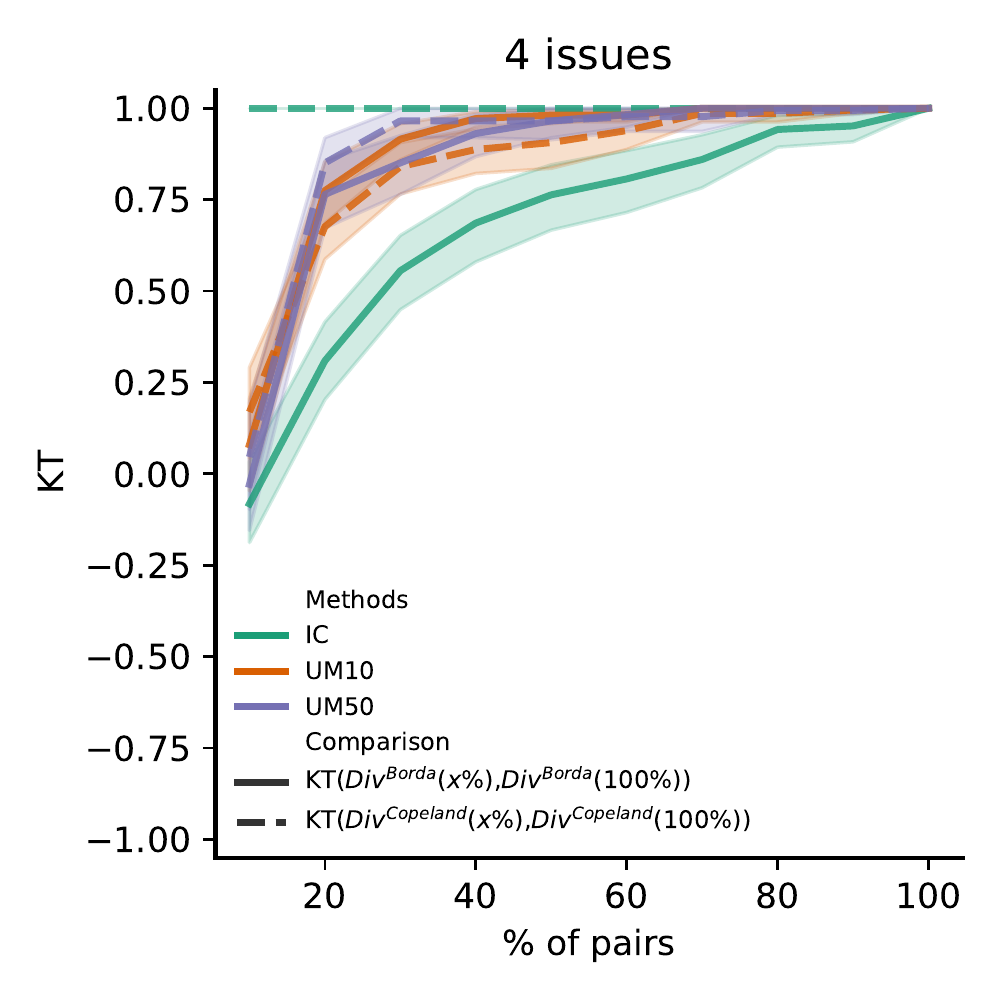}
    \includegraphics[width=0.55\linewidth]{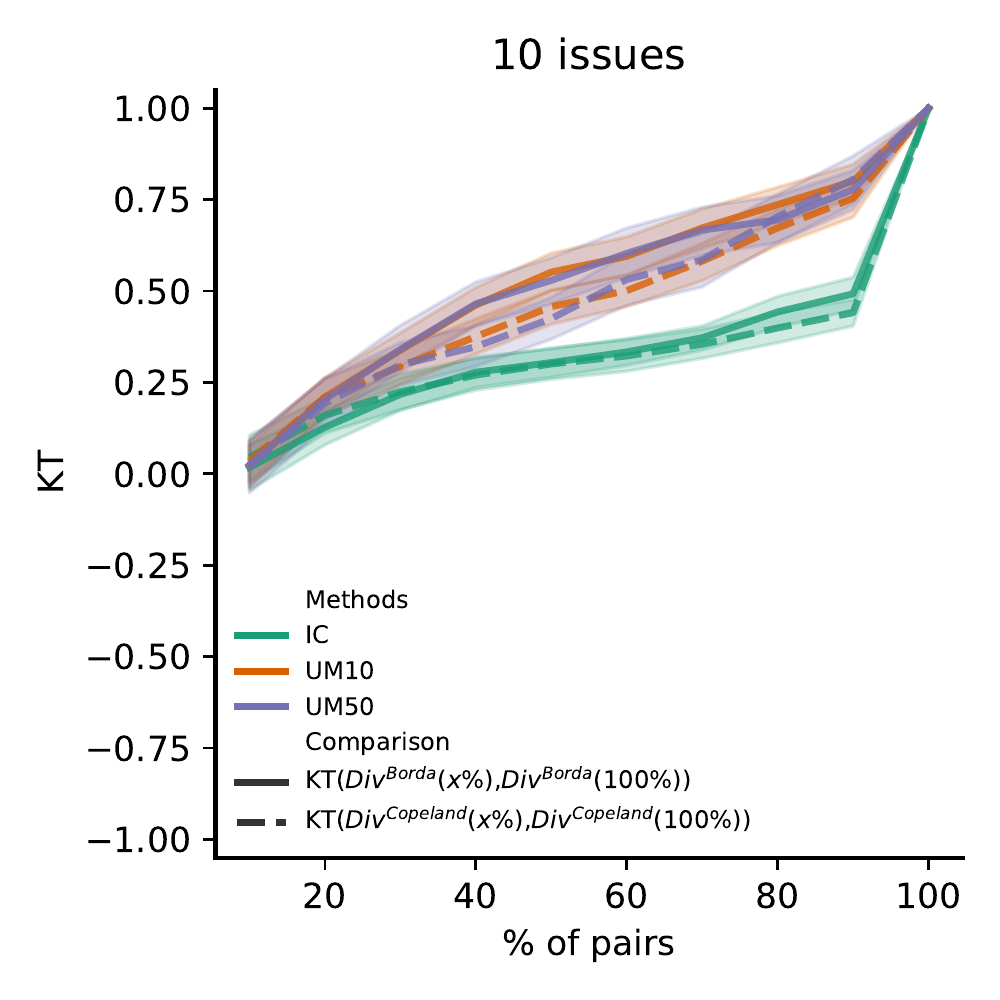}
    \includegraphics[width=0.55\linewidth]{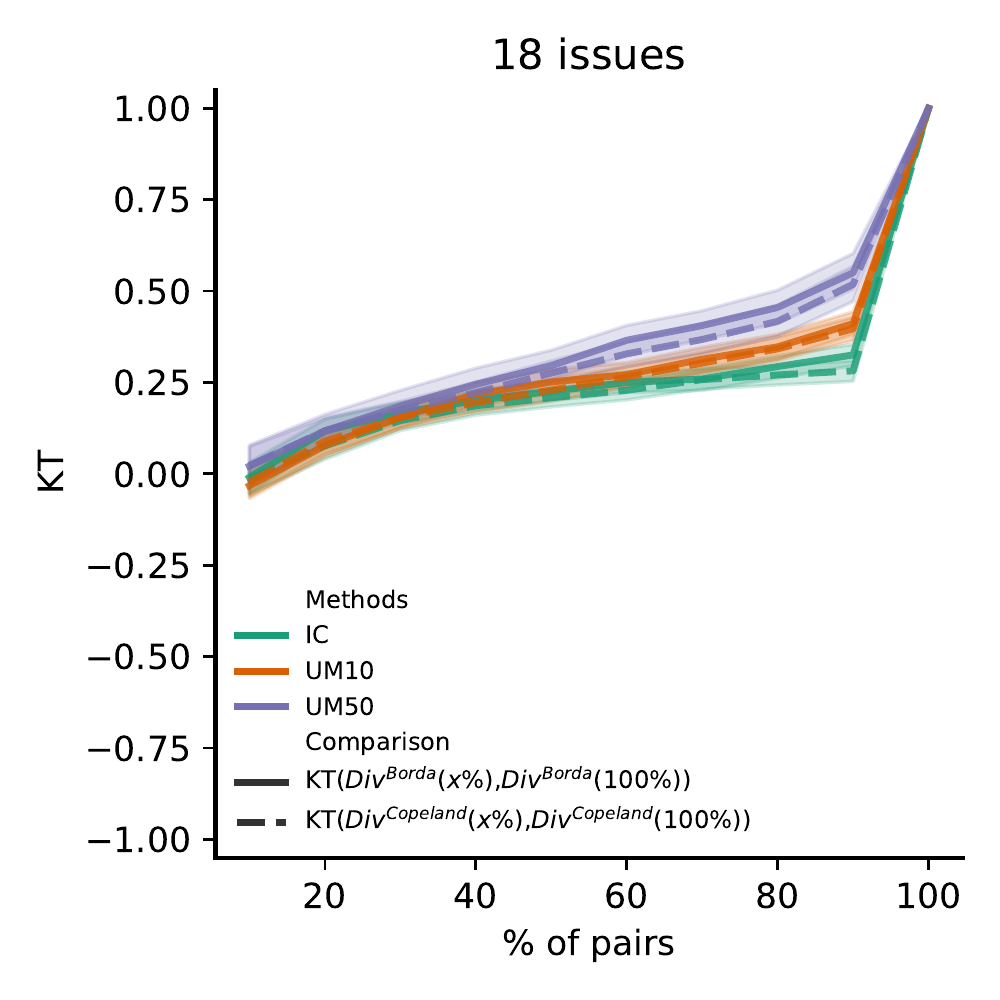}
    \caption{KT correlation between the divisiveness given every agent's full rankings are known and when only a certain percentage of the full preferences are known computed using both the Borda or Copeland scoring (represented by the solid and dashed line, respectively). The colours represent the methods to create the rankings, IC, UM10, UM50.}
    \label{fig:robustAppendix}
\end{figure*}


\subsection{Additional details of the experiments in Section~\ref{sec:robustness}}\label{app:sec:robustness}

The robustness of rank aggregators has been widely studied, either in terms of communication complexity, i.e., bounding how much information needs to be elicited to compute the full ranking \cite{Conitzer2005communication}, or, more recently, by assessing the effect of perturbations of the input \cite{kahng2019statistical}.
In this section we evaluate, via simulations, how many queries on pairwise comparisons are needed from voters to compute a robust notion of divisiveness (assuming $\alpha=0$, for $\score=\borda$ and $\score=\copeland$). 
The pairwise comparisons from the agents' full rankings are removed at random. 
Here, we focus on $n=100$ and $100$ profiles of each type and take the average over each of the correlations.

As we are removing parts of the rankings given by the agents, we need to compute divisiveness on incomplete rankings as in the original definition by \citet{navarrete2022understanding}.  When $\score=\copeland$, we see that the definition of divisiveness is well-defined on incomplete rankings. However, on incomplete rankings we use the win rate when calculating the divisiveness instead of  $\borda$, noting that they are equivalent on complete rankings. We define the win rate of an issue $a$ to be $\sum_{b\in \I\backslash \{a\}} \frac{\#(\N_{a>b})}{\#(\N_{a\succ b }\cup\N_{b\succ a })\cdot (m-1)}$.

In Figure~\ref{fig:robustAppendix}, we show the impact of removing pairwise comparisons from the agents' rankings on how accurate the divisiveness measure is on a partial profile. 
We focus on $m\in \{4,10, 18\}$, shown in the figures from left to right. 
As a general trend, when there are more issues, we need more information about the full rankings for the divisiveness ranking to be more accurate. For the average KT score to be at least $0.5$, we need roughly $20\%$, $70\%$, and $90\%$ of the ranking when the numbers of issues are $4, 10$, and $18$, respectively. 
Moreover, we see that when the profiles are less correlated, i.e., created with the IC method, more pairwise comparisons are required than the other methods to be as accurate.
Lastly, the choice of score does not make much of a difference.

One takeaway from these simulations is that for many issues, it is hard to predict the divisiveness ranking under incomplete information accurately. Thus, there can be an element of control, as if the agents are restricted to answering a subset of pairwise comparisons, then it may be possible to make one issue divisive easily.


\begin{figure*}[t]
    \centering
    \includegraphics[width=\textwidth]{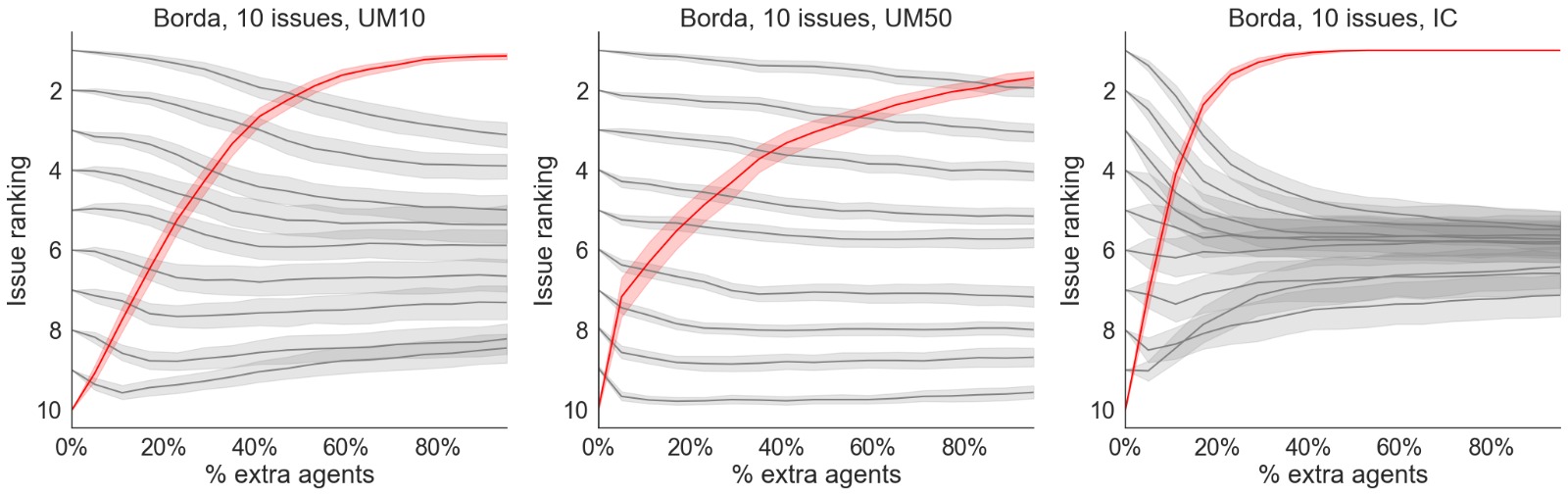}
    \caption{The effect of the heuristic $\inject{\borda}$ on $100$ profiles built using the method UM10, UM50 and IC, respectively from left to right with $50$ agents and $10$ issues. The red line highlights the target issue, the last issue in the divisiveness ranking, becoming more divisive via the addition of new agents via $\inject{\borda}$.}
    \label{fig:exp3followordering}
\end{figure*}

\begin{figure*}[t!]
    \centering
    \includegraphics[width=\textwidth]{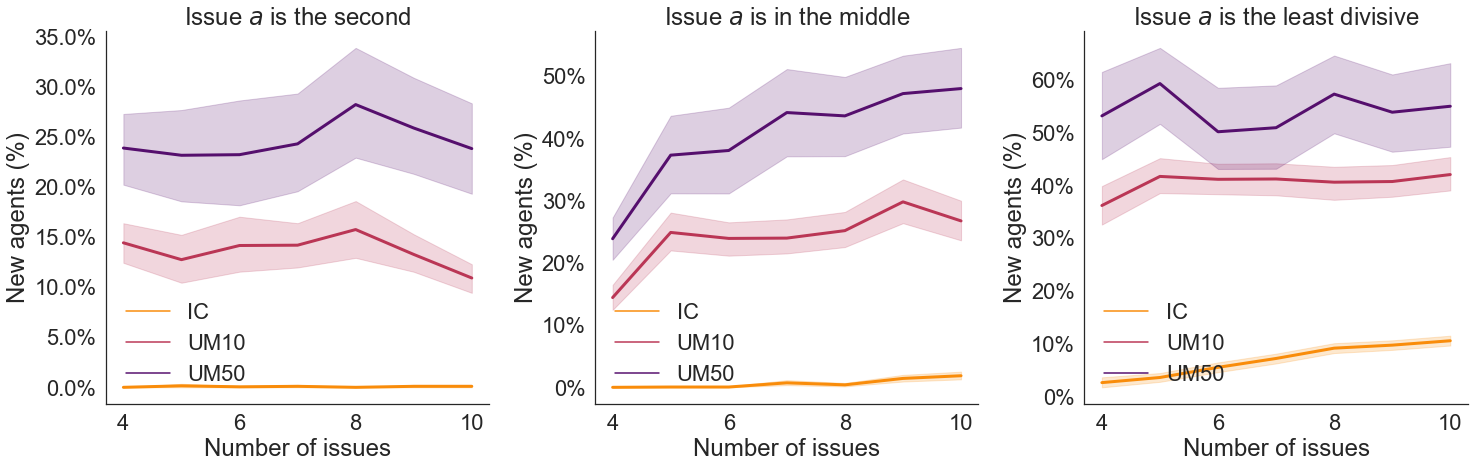}
    \caption{This figure shows the effect of the number of issues for each of the different profile creation methods on the number of new agents required by $\inject{\copeland}$ to make the target issue $a$ the most divisive issue. From left to right, $a$ begins, in the original profile, being the second most divisive issue, middle of the divisiveness ranking, and the least divisive issue.  }
    \label{fig:exp3percentperperissue}
\end{figure*}

\subsection{Additional details of experiments in Section~\ref{sec:controladd}}\label{app:sec:manip}

In this section we inspect the simulations to complement the results in Section~\ref{sec:controladd}, where we showed that a simple heuristic (described in Section~\ref{sec:controladd}) can change the divisiveness ranking such that a target issue becomes the most divisive in the ranking by adding in extra controlled agents, which can be thought of as bots or Sybil agents. These simulations reflect the robustness of the divisiveness measures when used in online contexts where the creation of additional agents is possible.

In Figure~\ref{fig:exp3followordering} we show the effect of the heuristic on $100$ profiles for each creation method (IC, UM10, and UM50) with $10$ issues and $50$ agents in the original profile using $\score=\borda$ and $\alpha=0$. The purpose of these images is to show the effect on all of the issues when the target issue originally was the least divisive. 

For each plot in Figure~\ref{fig:exp3followordering}, we see that the divisiveness of the non-target issues slightly tends to the middle rankings but generally remains roughly the same (each of the grey lines). Next, we see that the more correlated the agents are, the longer it takes the final issue in the divisiveness ranking to become the most divisive issue using this heuristic. Thus, the IC profiles have the target issue becoming the most divisive much more quickly than the UM10 profiles, which in turn has the least divisive issue becoming the most, more rapidly than in the UM50 profiles. Note that when computing the same when $\score=\copeland$, that the images are very similar.

To complement Figure~\ref{fig:bordaManip}
We now inspect the effect of the number of issues on 
$n=100$ and $100$ profiles for each $m\in \{4,\cdots,  10\}$ and each profile creation method IC, UM10, UM50 (corresponding to the yellow, red, and purple lines, respectively, in the figure). In general, the number of additional agents required is affected much by the number of issues in the profile. An intuitive aspect seen in the figure is that the fewer positions required for $\inject{\copeland}$ to move the target issue, the fewer extra agents are needed, i.e., more agents are required to make the least divisive issue the most, rather than the second most divisive issue the most.   

A significant takeaway from the simulations seen in the figures is that profiles created with the IC method are very easily manipulated by $\inject{\copeland}$,  with even $10$ issues and the target issue being initially the least divisive issue that only $10\%$ extra agents are required. 
With more correlated profiles, i.e., made with UM10 and UM50, we see that more agents are required to manipulate the divisiveness ranking. We see that around an additional $50\%$ of the agents are required to make the least divisive issue the most in significantly correlated profiles. Hence, the heuristic is an effective way of manipulating the divisiveness ranking. Furthermore, for more modest aims, such as making the second most divisive issue the most, it seems possible by adding $15\%$ new agents for UM10 and $25\%$ for UM50. 

The same experiment for $\score=\borda$ produces similar figures, yet with slightly more noise. 

\end{document}